\newtheorem{theorem}{Theorem}[section]
\newtheorem{corollary}{Corollary}[theorem]
\newtheorem{lemma}[theorem]{Lemma}
\theoremstyle{remark}
\theoremstyle{definition}
\newtheorem{definition}{Postulate}[section]
\begin{document}


\title{Area law in the exact solution of many-body localized systems}

\author{Evgeny Mozgunov}
 \email{mvjenia@gmail.com}
\affiliation{ Institute for Quantum Information and Matter,
  Department of Physics,
  California Institute of Technology
}%


\date{\today}

\begin{abstract}
Many-body localization was proven under realistic assumptions by constructing a quasi-local unitary rotation that diagonalizes the Hamiltonian (Imbrie, 2016). A natural generalization is to consider all unitaries that have a similar structure. We bound entanglement for states generated by such unitaries, thus providing an independent proof of area law in eigenstates of many-body localized systems. An error of approximating the unitary by a finite-depth local circuit is obtained. We connect the defined family of unitaries to other results about many-body localization (Kim et al, 2014), in particular Lieb-Robinson bound. Finally we argue that any Hamiltonian can be diagonalized by such a unitary, given it has a slow enough logarithmic lightcone in its Lieb-Robinson bound.

%
\end{abstract}

\maketitle

\section{\label{sec:level1}Introduction}

Physical models change their behavior in the presence of disorder ~--- a randomness in the terms of the corresponding Hamiltonian that's breaking the translation symmetry. First step in understanding the physics of disorder was made by Anderson\cite{mrAnderson}. He mathematically demonstrated a phenomenon of localization for non-interacting particles in a disordered potential. A quantum particle doesn't leave the vicinity of its starting point. This behavior of non-interacting particles can be thought of as a zeroth order approximation in the description of the interacting system. The question of convergence of any expansion around the non-interacting solution turned out to be a hard nut to crack. It was noted\cite{Tau} that an interacting system can be thought of as a single particle traveling in the Fock space, but the corresponding potential was not localizing in the Anderson sense. First positive answer was presented by Basko et al\cite{BAA}, which is when this problem became widely known as many-body localization (MBL). The achievement of this pioneering work was to find a realistic system (electrons in grains with a weak tunneling between them) and a range of parameters (weak interaction) where the diagrammatic expansion becomes tractable. The authors demonstrated zero conductivity at a finite temperature ~--- the system behaved as an Anderson insulator would.

An alternative route of investigation concentrated around disordered spin systems and conjectures by Huse et al\cite{huse}.
Spins-$1/2$ are described by Pauli matrices $\sigma_i^\alpha$. These degrees of freedon are normally fixed, and excitations do not always admit a quasiparticle description; what localization meant was open to debate. Huse et al conjectured that dressed ("tailed" here and below) spin operators $\tau^\alpha_i$ can be found that are conserved quantities of the dynamics. The set of those quasi-local conserved quantities $\tau^\alpha_i$ was conjectured to form an alternative (logical) operator basis. The coefficients of the mapping between the two thus is given by the trace norm tr$(\sigma_i^\alpha \tau^\beta_j)$ which decays exponentially according to quasi-locality of $\tau_i^\alpha$. This mapping allows to rewrite the spin Hamiltonian as a classical Hamiltonian (only containing $\tau_i^z$) in the logical basis. This logical Hamiltonian contains a non-local exponentially decaying interaction inherited from quasi-locality of $\tau_i^\alpha$.

All of the above conjectures turn out to be true for a specific spin system solved by Imbrie\cite{Imbrie}. His construction directly obtained the logical Hamiltonian as expected by Huse et al\cite{huse} by interleaving steps of perturbation theory with spatially sparse non-perturbative rotations. The exponential tails are adjusted as no decay happens in the regions containing those non-perturbative rotations, called resonances. But resonances are proven to be rare thus exponential decay holds on average. We will present the construction in more detail below, as it inspired our own definition of many-body localization. The path to showing quasi-locality of logical operators lies through a property called telescopic sum that we show. To our knowledge Imbrie's work\cite{Imbrie} is the only first principles proof for spin systems that serves as a beacon guiding other mathematical work\cite{annal} in the field.

Quite a lot of progress has been made by taking a different starting point than first principles (the Hamiltonian). Most notably, a Lieb-Robinson bound has been proven\cite{Isaac} from conjectures alike ones by Huse et al\cite{huse}. Lieb-Robinson bound characterizes how an operator evolves under the dynamics generated by the Hamiltonian. Normally the region on which the majority of the operator weight can be found (in terms of trace coefficients with the operator basis as shown above) grows linarly with evolution time. In MBL systems the operator spreads slower, only logarithmic in time. We show that is also true for the first principles solution.

The intuition about MBL was often informed by numerical simulations, in particular Nayak et al\cite{nayak} conjectured that a finite-depth local circuit can map an eigenstate approxiamtely to a product state. We will confirm that conjecture by providing a bound on the error of such mapping, and emphasize that a single unitary works equally well for all eigenstates. A circuit on top of a product state can be thought of as a matrix-product state. Some of the authors focused on a so called strong MBL, where they were able to prove\cite{frieza} a matrix-product state representation of every eigenstate. They also conjectured that a stronger Lieb-Robinson bound where the lightcone does not spread at all holds for the states at low energies. We find no evidence for that claim. 


\section{\label{sec:level2}Properties of the family of Imbrie's solutions}
Every Hamiltonian $H$ is characterized by its eigenvalues $H_{diag}$ and eigenvectors $U$:
\begin{equation}
UHU^{\dag}=H_{diag}
\end{equation}
 Imbrie\cite{Imbrie} presented a construction of $U = \prod_k U_k$ that works for a specific Hamiltonian on a 1d chain:
 \begin{equation}
 H=\gamma\sum_i\sigma^x_i +\sum_{i}h_i \sigma^z_i + J_i \sigma^z_i \sigma^z_{i+1}
 \end{equation}
 with $-1 \leq h_i,J_i\leq 1$ uniform i.i.d. random numbers and $\gamma\ll 1$. The structure of  $U = \prod_k U_k$ is illustrated in Figure \ref{fig:cir1}. We shall present the construction abstractly, without specifying all of the internal structure of $U_k$. We believe that for any MBL system a generalization of Imbrie's construction can be defined, so our presentation can be read both as a restatement of Imbrie's proof, and as a postulate about any system sufficiently deep in the MBL phase.

\begin{figure}
    \centering
    \includegraphics[width=0.8\columnwidth]{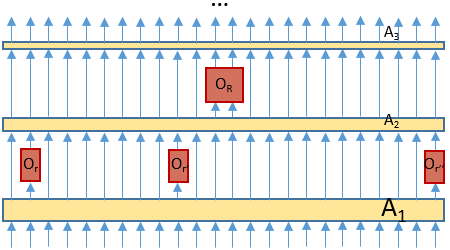}
    \caption{Definition of the circuit $U$}
    \label{fig:cir1}
\end{figure}
First consider no resonances (no red blocks in Figure \ref{fig:cir1}). There are infinitely many terms in the product  $U = \prod_k U_k$ counted by $k=1,2..$. 
\begin{definition}{\emph{(Support)}}
Each block
$
U_k =e^{A^{(k)}},
$
 where $iA^{(k)}$ is a Hermitean generator
 that is a sum of floor$(\frac{15}{7}L_k)$-local terms on each site of the chain, where $L_k =(15/8)^k$:
\begin{equation}
A^{(k)} = \sum_i A^{(k)}_i
\end{equation}
\end{definition}
Here locality is in terms of the interval on the chain that operator covers ~--- the distance in space between the leftmost and the rightmost spin that the operator depends on. We provide a guide to the text of Imbrie's paper in the Appendix B, explaining how we chose the number $15/7$ in the support definition above. We also conclude that Imbrie's proof allows one to restrict the norm of $\|A^{(k)}_i\|$ as follows:
\begin{definition}\label{Conj} {\emph{(Norm)}}
\begin{equation}
 \|A^{(k)}_i\| \leq c'\chi^{\textrm{ceil}L_{k-1}}
\end{equation}
where $\chi = c \gamma^{0.95}$ and $c, c'$ are constant numbers.
\end{definition}
Note that the magnetic field $\gamma$ enters the definition here. The construction rests on the fact that the $\frac{15}{7}L_{k}$ appearing in the support has the same scaling with $k$ as $L_{k-1} = \frac{8}{15}L_k$. The coefficient $c$ is obtained as a polynomial of $\textrm{exp(supp}A^{(k)}_i/L_k)$. If scalings were to be different even by a logarithmic factor (say, supports grow as $kL_k$), then instead of $(c\gamma^{0.95})^{L_{k-1}}$ the bound would contain $L_{k-1}! \gamma^{L_{k-1}}$, which diverges regardless how small $\gamma$ is. In Imbrie's paper the fact that support has $\sim L_k$ scaling is used several times, however the proof of it raises some questions, as we note in Appendix B. These two postulates are enough to prove the convergence of the no-resonance infinite product  $U = \prod_k U_k$ in the following sense:
\begin{theorem}[collar approximation]
Operators acquire exponentially decaying tails when conjugated by no-resonance $U$.
\begin{align}
\|UXU^{\dag}-U_cXU_c^{\dag}\| \leq 24\cdot 2^{0.5S(X)}\alpha^{c+1}\|X\|,\\ \|U^{\dag}XU-U_c^{\dag}XU_c\| \leq 24\cdot 2^{0.5S(X)}\alpha^{c+1}\|X\|
\end{align}
where $U_c$ is constructed using the generators $A^{(k)}$ truncated to a collar of $c$ sites of the support of $X$ (all the terms reaching outside the collar are dropped). $\alpha = 24c'\chi^ {\frac{56}{225}} $ is a power of transverse field $\gamma$.
\end{theorem}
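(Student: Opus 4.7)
\emph{Proof proposal.} The natural approach is a telescoping argument over the layers $k=1,2,\ldots$ of $U=\prod_k U_k$, combined with a support analysis of the dropped local terms. Introduce a sequence of hybrid unitaries $V^{(k)}$ obtained from $U$ by replacing $U_j$ with its truncated version $U_{j,c}=\exp(A^{(j)}_c)$ for $j<k$ and keeping $U_j$ untouched for $j\geq k$, so that $V^{(1)}=U$ and $V^{(\infty)}=U_c$. The target difference telescopes as
\[
U X U^\dagger - U_c X U_c^\dagger \;=\; \sum_{k\geq 1}\bigl(V^{(k)} X V^{(k)\dagger} - V^{(k+1)} X V^{(k+1)\dagger}\bigr),
\]
and a standard operator-calculus bound (linear interpolation between $A^{(k)}$ and $A^{(k)}_c$ together with unitarity of $e^{tA^{(k)}}$ and $e^{tA^{(k)}_c}$) controls the $k$-th summand by a constant multiple of $\|[\delta A^{(k)},Y_k]\|$, where $\delta A^{(k)}=A^{(k)}-A^{(k)}_c$ is the sum of dropped local terms and $Y_k$ is the operator after the first $k-1$ truncated conjugations have been applied.

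\emph{Support and counting.} Since $Y_k$ is built from generators whose supports lie inside the collar, $\mathrm{supp}(Y_k)$ stays in it. A dropped local term $A^{(k)}_i$ has support of size $\tfrac{15}{7}L_k$ extending beyond the collar; for $[A^{(k)}_i,Y_k]\neq 0$, its support must additionally reach the collar interior, hence span at least $c+1$ sites. This forces $\tfrac{15}{7}L_k\geq c+1$, equivalently $L_{k-1}\geq \tfrac{56}{225}(c+1)$, defining the first contributing level $k^\ast$. For $k<k^\ast$ the summand vanishes. For $k\geq k^\ast$, only $O(L_k)$ boundary-straddling terms on each side of $\mathrm{supp}(X)$ contribute, each of norm at most $c'\chi^{\mathrm{ceil}\,L_{k-1}}$ by Postulate~\ref{Conj}. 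Because $L_{k-1}$ grows geometrically, the sum $\sum_{k\geq k^\ast}$ is dominated super-exponentially by its leading term, of order $c'\chi^{56(c+1)/225}\|X\|$. Absorbing the numerical and combinatorial constants into $\alpha = 24c'\chi^{56/225}$ yields $\alpha^{c+1}$, while the prefactor $2^{0.5 S(X)}$ enters when bounding $\|[\delta A^{(k)},Y_k]\|$ uniformly on the $2^{S(X)}$-dimensional Hilbert space of $\mathrm{supp}(X)$ (e.g.\ via a Frobenius-to-operator norm conversion). The second displayed inequality follows from the identical argument applied with $A^{(k)}\mapsto -A^{(k)}$.

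\emph{Main obstacle.} The most delicate step is the self-consistent support bound on $Y_k$: ensuring that truncated conjugations never push the operator outside the collar, even as $k\to\infty$. This requires a bootstrap exploiting the super-exponential decay $\chi^{L_{k-1}}$ to beat the geometric growth of $L_k$. A secondary subtlety is pinning down precisely the exponent $56/225$ rather than a slightly smaller fraction, which depends on using the tight span condition $\tfrac{15}{7}L_k\geq c+1$ and on the monotonicity of $\chi^{L_{k-1}}$ in $k$, so that the $k=k^\ast$ term genuinely dominates the telescoping sum.
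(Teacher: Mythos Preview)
Your layer-by-layer telescoping has a real gap at the support step. You claim that a dropped term $A^{(k)}_i$ must ``reach the collar interior, hence span at least $c+1$ sites,'' which would force $\tfrac{15}{7}L_k\geq c+1$ and kill all $k<k^\ast$. But this is only true if $Y_k$ is still supported on $\mathrm{supp}(X)$. It is not: $Y_k$ is $X$ conjugated by the first $k-1$ \emph{truncated} unitaries $U_{j,c}=e^{A^{(j)}_c}$, each of which is supported on the full collared region $\mathrm{supp}(X)+2c$, so $Y_k$ generically fills that region. Then a dropped $A^{(k)}_i$ merely has to straddle the collar boundary by one site to give a nonzero commutator, and \emph{every} $k\geq 1$ contributes. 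Your sum over layers then collapses to $\sum_k O(L_k)\,c'\chi^{L_{k-1}}\|X\|=O(\chi)\|X\|$, with no $c$-dependence at all---the $\alpha^{c+1}$ is lost. You flag the ``self-consistent support bound on $Y_k$'' as the delicate point, but the real obstacle is not $Y_k$ escaping the collar (it cannot, by construction); it is that $Y_k$ fills the collar, invalidating the $k^\ast$ cutoff.

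What the paper does instead is precisely to keep track of how much of the weight of the evolving operator has spread how far. Each layer $e^{A^{(k)}}$ is expanded as a nested-commutator series $\sum_{j_k}\tfrac{1}{j_k!}[A^{(k)},\cdot]^{j_k}$ (Lemma~\ref{supercharge}), so the full product is indexed by strings $\{j_k\}$ with support increment $\sum_k j_k(S(A^{(k)})-1)$. Regrouping by the total one-sided spread $j$ gives $UXU^\dagger=\sum_j X_j$ with $\|X_j\|\leq 6\cdot 2^{0.5S(X)}\alpha^j\|X\|$; the collar error is then $2\sum_{j>c}\|X_j\|$. The spatial spread, not the layer index, is the right organizing variable. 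Incidentally, the $2^{0.5S(X)}$ prefactor in the paper is a combinatorial artifact of the $2^{j_0}$ factor in Lemma~\ref{supercharge} (with $j_0\sim S(X)/\Delta S$), not a Frobenius-to-operator norm conversion.
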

For the proof, see Appendix A.

    \includegraphics[width=0.4\columnwidth]{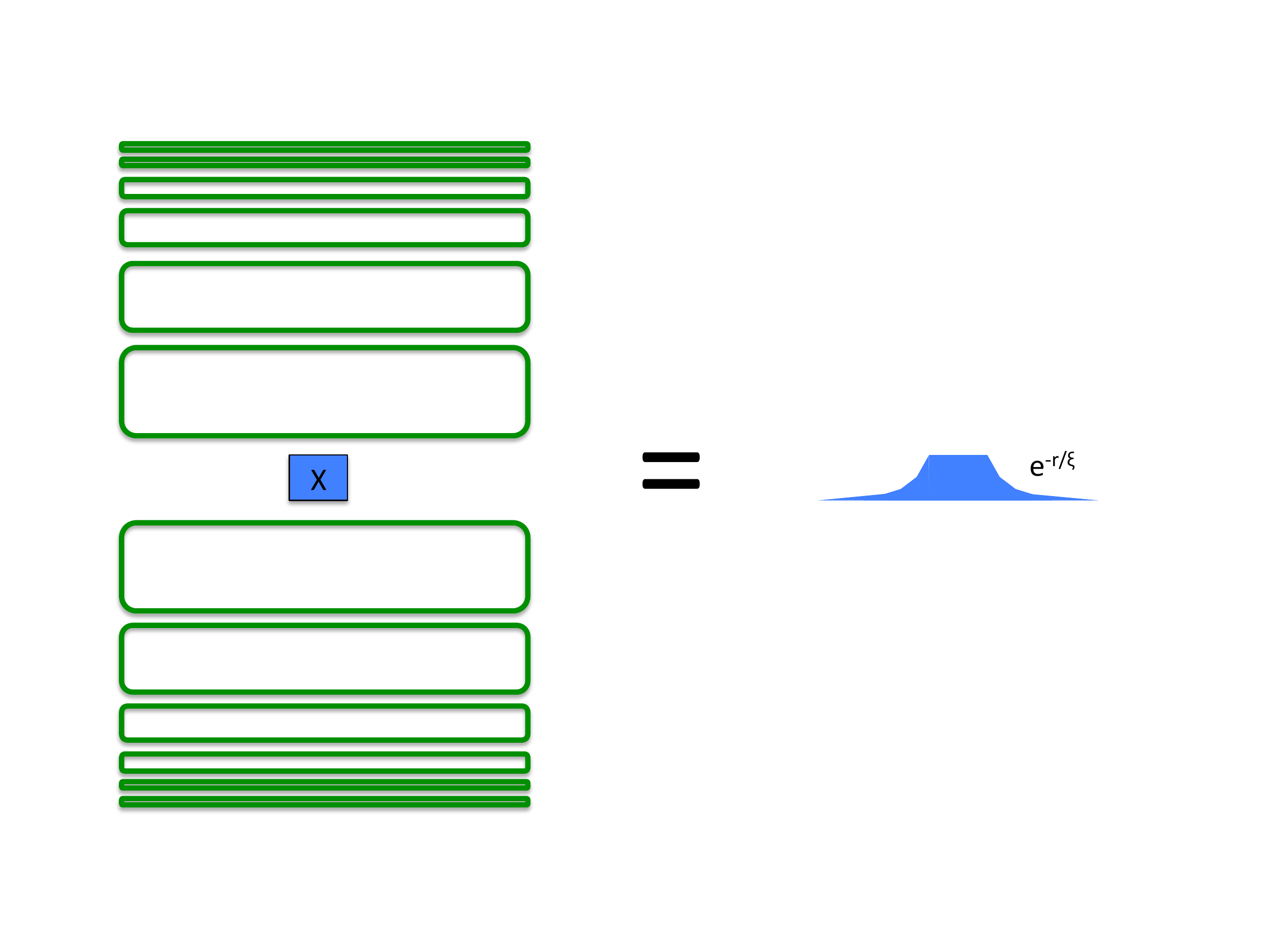}
    \includegraphics[width=0.4\columnwidth]{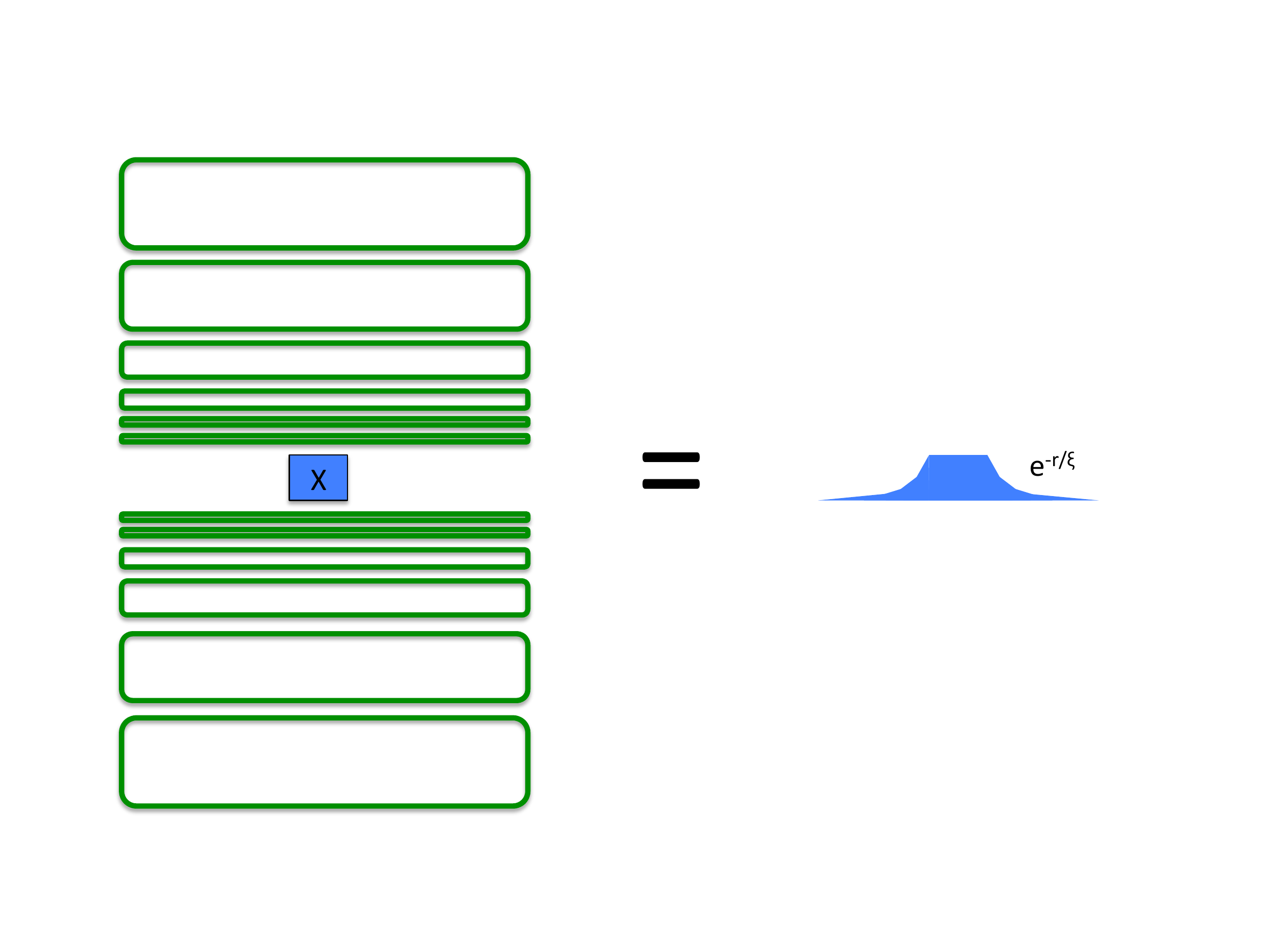}

Consider the implications of this result for the eigenvalues $H_{diag}$. 
\begin{align*}
H_{diag} =UHU^{\dag}=\gamma\sum_i(\sigma^x_i)_{tailed} +\\+\sum_{i}h_i (\sigma^z_i)_{tailed} + J_i (\sigma^z_i \sigma^z_{i+1})_{tailed}
\end{align*}
On the other hand, since $H_{diag}$ is a diagonal matrix:
\begin{equation}
H_{diag} =f(\sigma_z) = \sum_{s\subset \{1\dots L\}} J_s \prod_{i\in s} \sigma_i^z.
\end{equation}
The bound on $J_s$ can be directly inferred from the general bound on the telescopic sum (Theorem \ref{thepain} in the Appendix):
\begin{corollary}
\begin{equation}
 J_s \leq 12 \alpha^{d(s)}
\end{equation}
where $\alpha = 24c'\chi^ {\frac{56}{225}} $,  $d(s)$ is \emph{consecutive support} and no resonances are within it.
\end{corollary}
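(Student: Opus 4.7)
The plan is to expand $H_{diag} = \sum_X UXU^{\dagger}$ over the local terms $X$ of $H$ --- the transverse field $\gamma\sigma^x_j$, the longitudinal field $h_j\sigma^z_j$, and the Ising coupling $J_j\sigma^z_j\sigma^z_{j+1}$ --- each of operator norm at most $1$ and consecutive support $S(X)\le 2$. Writing $J_s=\sum_X J_s(X)$ for the coefficient of $\prod_{i\in s}\sigma^z_i$ in the Pauli expansion of $UXU^{\dagger}$, and using that a Pauli coefficient of any operator is bounded by its operator norm, it suffices to upper-bound $\|UXU^{\dagger}-U_c X U_c^{\dagger}\|$ for the smallest collar $c$ whose truncation has support failing to contain $s$; that truncation then contributes $0$ to $J_s(X)$ and the collar approximation provides the error bound.

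A single symmetric truncation centered at the position $j$ of $X$ only buys a factor of $\alpha^{d(s)/2}$, because the collar must extend equally in both directions from $j$. To get the full exponent $d(s)$, the collar approximation must be applied asymmetrically: once with a collar that just fails to reach $\max s$ on the right (giving a factor $\alpha^{\max s - j}$) and independently once with a collar that just fails to reach $\min s$ on the left (giving $\alpha^{j-\min s}$). The telescopic sum theorem in the appendix (Theorem~\ref{thepain}) is precisely the statement that these two small factors multiply rather than add in the coefficient of a Pauli string whose support requires reaches on both sides of $j$, producing a net $\alpha^{d(s)}$. The corollary then follows by applying that theorem term by term to the local pieces of $H$.

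Next I would sum over the $O(d(s))$ local terms $X$ with position inside $[\min s,\max s]$; each gives a bound of the form $C\alpha^{d(s)}$, while local terms $X$ farther away contribute extra geometric factors in $\alpha$ that sum to $O(1)$ by smallness of $\alpha$. Absorbing all numerical constants into the prefactor yields the claimed $12\alpha^{d(s)}$.

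The main obstacle, and the reason for the hypothesis that no resonances lie within the consecutive support, is that in a resonant region Imbrie's construction uses non-perturbative rotations whose generators violate the norm bound of Postulate~\ref{Conj}. Those rotations can transport operator weight across the resonance without paying a small factor, which destroys the left/right factorization that produces $\alpha^{d(s)}$. Requiring the consecutive support $d(s)$ to be free of resonances is exactly the condition that makes the telescopic sum theorem applicable and that allows both one-sided extensions to be controlled independently.
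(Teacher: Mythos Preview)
Your diagnosis of the difficulty is exactly right: the telescopic sum in Theorem~\ref{thepain} is indexed so that $X_j$ has support $S(X)+2j$ and norm $\lesssim\alpha^j$, so if you only ask ``for which $j$ does the support of $X_j$ contain $s$?'' you get $j\ge\max(l,r)$ with $l=i_X-\min s$, $r=\max s-i_X-S(X)+1$, and hence at best $\alpha^{d(s)/2}$ when $X$ sits near the middle of $[\min s,\max s]$.

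However, your resolution misreads the theorem. Theorem~\ref{thepain} as stated concerns only the symmetric collar; it does \emph{not} say that a left collar factor and a right collar factor multiply. Applying two one-sided truncations in succession would give you an error $\alpha^l+\alpha^r$, not $\alpha^{l+r}$. What the paper's one-line proof actually relies on is a finer fact visible in the \emph{proof} of Theorem~\ref{thepain} rather than its statement: each $X_j$ is a sum of individual nested-commutator monomials $[A_{i_n}^{(k_n)},[\ldots,[A_{i_1}^{(k_1)},X]\ldots]]$, and any single such monomial extends the support of $X$ by at most $\sum_m(S(A^{(k_m)})-1)=j$ \emph{in total}, distributed between left and right, not by $j$ on each side. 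Consequently no monomial in $X_j$ has support large enough to contain $s$ unless $S(X)+j\ge d(s)$, so the Pauli coefficient of $\prod_{i\in s}\sigma_i^z$ in $X_j$ vanishes for every $j<d(s)-S(X)$. With that vanishing in hand you simply sum $\|X_j\|\le 6\cdot 2^{0.5S(X)}\alpha^j$ over $j\ge d(s)-S(X)$ and recover the $\alpha^{d(s)}$ exponent. Your treatment of the sum over local Hamiltonian terms and of the no-resonance hypothesis is fine; the gap is only in attributing the multiplicative structure to the theorem statement rather than to this monomial-level support observation.
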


Using a time-averaged version of this (see Section \ref{tia}) as a starting point in an ensemble-averaged sense, Kim et al\cite{Isaac} derive Lieb-Robinson bound:
\begin{corollary}
\begin{align*}
\| [A,B(t)]\| \leq ate^{-bd_{A,B}}=\\ = at\alpha^{d_{A,B}}
\end{align*}
where $B(t) = e^{iHt} B e^{-iHt}$, $\alpha = 24c'\chi^ {\frac{56}{225}} $, $a=c\|A\|\|B\|$ and $c$ is a constant number.
\end{corollary}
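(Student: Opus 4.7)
The strategy is to reduce the problem to operator spreading under the diagonal Hamiltonian and then exploit the exponentially decaying classical couplings that were just bounded.

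First I would pass to the logical basis. Using $H=U^{\dag}H_{diag}U$,
\begin{equation}
\|[A,B(t)]\| = \bigl\|\bigl[\tilde A,\, e^{iH_{diag}t}\tilde B\, e^{-iH_{diag}t}\bigr]\bigr\|
\end{equation}
with $\tilde A=UAU^{\dag}$ and $\tilde B=UBU^{\dag}$. By Theorem 2.1, each of $\tilde A,\tilde B$ is exponentially well-approximated by a collar-truncated operator supported within distance $\ell$ of the original operator, at the cost of an additive error of order $\alpha^{\ell}\|A\|\|B\|$.

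Next I would exploit the fact that $H_{diag}=\sum_s J_s\prod_{i\in s}\sigma^z_i$ is a sum of mutually commuting terms. The Heisenberg evolution of any Pauli string $P$ under $H_{diag}$ leaves its spatial support invariant; $P$ is merely multiplied by a product of commuting phase rotations $\prod_{s} e^{2iJ_st\,Z_s}$ restricted to those $s$ anticommuting with $P$. Consequently a commutator between the collar-truncated $\tilde A_\ell$ and the evolved $\tilde B_\ell(t)$ can arise only through those $Z_s$-rotations whose support bridges the two collars; by the preceding corollary, each such bridging term satisfies $|J_s|\le 12\alpha^{d_{A,B}-2\ell}$. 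A first-order expansion in $t$ and a geometric sum over bridging $s$, combined with the exponentially small tails of $\tilde A,\tilde B$ already sitting inside the opposite collar, give a linear-in-$t$ contribution of order $t\,\alpha^{d_{A,B}}\|A\|\|B\|$ once $\ell$ is optimized and absorbed into the constants.

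The main obstacle, and the reason this sketch really requires the full Kim et al\cite{Isaac} machinery, is that the Corollary's $J_s$ bound applies only when the interval $s$ is free of Imbrie resonances. A typical interval of length $d_{A,B}$ does contain rare resonances, which produce anomalously large $J_s$. Kim et al address this by first time-averaging the commutator (as sketched in Section \ref{tia}), washing out the phases on resonant frequencies, and then ensemble-averaging over disorder; the probability of a resonance inside a window of length $\ell$ decays exponentially in $\ell$ by Imbrie's probabilistic estimates, so the resonant correction is subleading. Turning this combined time-and-ensemble-averaged statement into the pointwise-looking bound above --- and verifying that the linear-in-$t$ regime extends up to $t\sim\alpha^{-d_{A,B}}$, where the bound becomes order unity --- is where the subtle work lies.
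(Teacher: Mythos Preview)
Your overall route --- pass to the logical frame and exploit the exponentially decaying diagonal couplings --- is the same as the paper's. The substantive gap is exactly the point you flag as ``subtle work'': a first-order expansion in $t$ cannot deliver a bound that remains linear in $t$ all the way to $t\sim\alpha^{-d_{A,B}}$, and without that the statement is empty in the interesting regime. The paper's independent derivation in Appendix~A resolves this cleanly and without any Taylor expansion. It defines a length scale $m^*$ by $\alpha^{m^*}t=1$, then builds a telescopic sum for $e^{iH_zt}A_r e^{-iH_zt}$ by successively including the diagonal terms $H_z^{r+m^*+p}$ at distance $m^*+p$ from $A_r$: the zeroth piece is simply the evolution truncated to radius $r+m^*$ (no norm gain, but bounded support), and each further increment is controlled by the \emph{exact} inequality $\|e^{iH_z^{r+m^*+p}t}-1\|\leq \|H_z^{r+m^*+p}\|\,t\lesssim\alpha^{m^*+p}t=\alpha^p$. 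After re-conjugating with $U^\dag$ one has three nested telescopic sums whose resummation gives $A(t)$ tails decaying like $\alpha^q$ outside radius $m^*$; commuting with the bare $B$ then yields $\|[A(t),B]\|\lesssim\alpha^{d_{A,B}-m^*}=t\,\alpha^{d_{A,B}}$. There is no $\ell$-optimization and no restriction on $t$.

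Your paragraph on resonances is a detour you do not need here. This corollary lives in the no-resonance portion of Section~II (the preceding $J_s$ bound already carries the clause ``no resonances are within it''), and the paper's Appendix-A proof uses only the bare circuit $\prod_k e^{A^{(k)}}$ together with Theorem~\ref{thepain}. Kim et al are cited as having obtained the same bound from their own, averaged starting point, but the paper's derivation is self-contained and invokes neither time- nor ensemble-averaging; bringing that machinery in is unnecessary for this statement. A minor structural difference worth noting: the paper telescopes only one operator (three nested sums: $U$, then $H_z$ shell by shell, then $U^\dag$) and commutes with the other left bare, rather than collaring both $\tilde A$ and $\tilde B$ as you do.
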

We provide an independent derivation in Appendix A. Another known MBL effect of logarithmic growth of entanglement across the cut can also be proven\cite{Isaac}. We prove that $U$ is close to a finite-depth local circuit in Appendix \ref{absCarry}.

The benefit of taking Postulate \ref{Conj} as a starting point is that one gets area law essentially for free. Consider an eigenstate $|E_s\rangle= U^\dag|s\rangle$  where $|s\rangle$ is a product state of spins in z-basis and a cut at some location:
\begin{theorem}[Area law]
\begin{equation*}
S_{\text{cut}} = S(\rho_{left})< 650 c'\chi 
\end{equation*}
where $\rho_{left} = \textrm{tr}_{right} |E_s\rangle \langle E_s|$.
\end{theorem}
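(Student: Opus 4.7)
The plan is to show that the unitary $U = \prod_k U_k$ acts almost as a tensor product across any cut, so that $U^{\dagger}|s\rangle$ is almost a product state with entanglement concentrated in a local band around the cut. The dominant source of entanglement comes from the $k=1$ layer, whose cross-cut terms have total norm $O(c'\chi)$, and all higher-$k$ corrections are super-exponentially smaller.

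First, for each layer $k$ I would split the generator as $A^{(k)} = A^{(k)}_L + A^{(k)}_R + A^{(k)}_C$, where $A^{(k)}_C$ collects the terms $A^{(k)}_i$ whose $\text{floor}(\tfrac{15}{7}L_k)$-neighborhood straddles the cut. Since $A^{(k)}_L$ and $A^{(k)}_R$ have disjoint supports they commute, so $e^{A^{(k)}_L + A^{(k)}_R}$ factorizes across the cut and cannot create any entanglement on the product state $|s\rangle$. At most $\tfrac{15}{7}L_k$ terms can cross, each of norm at most $c'\chi^{\text{ceil}\,L_{k-1}}$, so $\|A^{(k)}_C\| \le \tfrac{15}{7}L_k\, c'\chi^{\text{ceil}\,L_{k-1}}$; summed over $k$ this is a rapidly convergent series dominated by the $k=1$ piece, of size $O(c'\chi)$.

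Next, I would reorder each $U_k$ as $e^{A^{(k)}_L + A^{(k)}_R}\, e^{A^{(k)}_C}$ up to a BCH commutator correction, then interleave the reorderings across all $k$ to obtain an approximation $U \approx W_L V W_R$, with $W_L$ and $W_R$ acting strictly on the two sides of the cut and $V$ supported in a thin band around the cut with $\|V - I\| = O(c'\chi)$. Then $|E_s\rangle = U^{\dagger}|s\rangle \approx W_R^{\dagger} V^{\dagger}(W_L^{\dagger}|s_L\rangle)|s_R\rangle$, so that only the local operator $V^{\dagger}$ contributes to entanglement across the cut.

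Finally, writing $V^{\dagger}|s\rangle = |s\rangle + |\delta\rangle$ with $\| |\delta\rangle \| = O(c'\chi)$ and $|\delta\rangle$ lying in a subspace of Schmidt rank at most $2^{O(L_1)}$ (determined by the support width of $V$), the reduced density matrix $\rho_{\text{left}}$ has one eigenvalue near $1$ and a bounded number of small eigenvalues of total weight $O((c'\chi)^2)$; evaluating $-\sum p\log p$ directly and folding in the geometric $k$-sum yields a bound of the form $\text{const}\cdot c'\chi$, matching the stated $650\, c'\chi$. The hard part will be controlling the accumulated BCH corrections across the infinite product $\prod_k U_k$, which should follow a telescopic argument analogous to the proof of the collar approximation theorem. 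A secondary difficulty is keeping the Schmidt rank of $|\delta\rangle$ bounded by a local dimension rather than the full Hilbert space dimension, so as to avoid a Fannes-type $\log d$ factor when converting operator-norm error into entropy.
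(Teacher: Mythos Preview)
Your overall intuition---that only the cross-cut pieces of the generators contribute, and that the $k=1$ layer dominates---is correct, but the execution has a real gap and the paper takes a cleaner route.

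\textbf{The gap.} You propose to collapse the whole infinite product into $U \approx W_L V W_R$ with a \emph{single} local $V$, and then assert that $|\delta\rangle = (V^\dagger - I)|s\rangle$ has Schmidt rank at most $2^{O(L_1)}$. But $V$ must absorb the cross-cut pieces $A^{(k)}_C$ from \emph{every} layer $k$, and the support of $A^{(k)}_C$ is $\sim L_k$, which grows without bound. So the support of $V$ is not $O(L_1)$, and the Schmidt rank of $|\delta\rangle$ is not bounded by any fixed local dimension. A Fannes-type conversion from $\|V-I\|$ to entropy would then pick up an uncontrolled $\log d$ factor, exactly the difficulty you flag at the end but do not resolve. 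The BCH reordering across infinitely many layers would also have to be done with great care, since commutators between $A^{(k)}_C$ and $A^{(k')}_{L,R}$ for $k'>k$ can themselves straddle the cut and enlarge the support of $V$ further.

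\textbf{What the paper does instead.} The paper never attempts a global factorization. It works layer by layer and invokes the \emph{small incremental entangling} bound (Bravyi's bound, the $c^*\approx 1.9$ constant from \cite{CLV}, applied as in \cite{LRBtop}): for a generator $H_{LR}=\sum_n r_n\, J^n_L\otimes J^n_R$ across the cut, the entropy produced by $e^{iH_{LR}t}$ in unit time is at most $c^*\sum_n |r_n|$, with no dependence on the support size or Schmidt rank. For layer $k$ this gives $\Delta S_k \le c^*\cdot\tfrac{4}{3}\,2^{S(A^{(k)})}\, c'\chi^{\mathrm{ceil}\,L_{k-1}}$, and summing the resulting (super)geometric series in $k$ yields the $650\,c'\chi$. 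This sidesteps both of your hard parts: no BCH bookkeeping across layers, and no Schmidt-rank control is needed because the SIE bound already converts generator norm directly into entropy. If you want to salvage your approach, the natural fix is to abandon the single $V$ and bound the entropy increment layer by layer---at which point you are essentially reproducing the paper's argument, with SIE being the sharpest tool for the per-layer step.
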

No other proofs of area law in an eigenstate appeared in the literature to our knowledge. It was considered to be an obvious corollary of exponential decay of correlations, however Imbrie did not prove exactly the same decay of correlations as required for 1d area law.\cite{Horo} We show our proof in Appendix A, and obtain the Postulate \ref{Conj} from the construction by Imbrie\cite{Imbrie} in Appendix B.

We now interleave the circuit with {\emph{resonances}} ~--- locations where perturbation theory in $\gamma$ failed and a local non-perturbative rotation had to be applied.
\begin{definition} {\emph{(R-size)}}
At each step, the operators acting between $e^{A^{(k)}}$ are 
\begin{equation}
O_k =\prod_r O_{k,r},
\end{equation}
where $O_{k,r}$ are big unitary rotations supported on $<4.2L_k =4.2(15/8)^k$  consecutive spins.
\end{definition}
Let $R_k$ be the set of spins involved in $O_k$. Imbrie\cite{Imbrie} explicitly presents a following bound on how rare they are:
\begin{definition} {\emph{(R-density)}}
The density of step-$k$ resonances
\begin{equation}
P(i\in R_k) = \text{lim}_{L\to \infty}\frac{|R_k|}{L}\leq c\epsilon^{c'\text{log}^2L_k}
\end{equation}
where $\epsilon = \gamma^{0.05}$ and $c, c'$ are constant numbers.
\end{definition}
We postulate {\bf{1-4}} as a definition of a full $U$. Now entanglement bound strongly depends on the position of the cut!
\begin{corollary}
\begin{equation*}
\overline{S_{cut}} \leq   650 c'\chi + 4\epsilon\textrm{ln}2
\end{equation*}
and {\emph{max}}$(S_{cut})\sim \text{log} L$ 
\end{corollary}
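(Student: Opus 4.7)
The plan is to split the bipartite entropy into two contributions: the no-resonance part, already controlled by the preceding area-law theorem, and an extra piece produced by resonance rotations that straddle the cut. I will first give a deterministic per-realization bound, then average via Postulate (R-density) for $\overline{S_{cut}}$, and finally union-bound over positions for the maximum.

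Concretely, I would write $U^{\dag}|s\rangle$ along its layered structure $\prod_k(e^{-A^{(k)}}O_k^{\dag})$. Stripping out the $O_k^{\dag}$'s returns exactly the setup of Theorem (Area law), contributing at most $650c'\chi$ to any cut. Each remaining scale-$k$ resonance $O_{k,r}^{\dag}$ is supported on at most $4.2L_k$ consecutive spins; when $R_{k,r}$ crosses the cut it can raise the bipartite entropy by at most $2\ln 2\cdot\min(|R_{k,r}\cap L|,|R_{k,r}\cap R|)\le 2.1L_k\ln 2$. Since scale-$k$ resonances are mutually disjoint, at most one per scale crosses a given cut, so summing yields the deterministic bound
\[
S_{cut}\;\le\;650c'\chi+2.1\ln 2\sum_k L_k\cdot\mathbf{1}[\exists r:R_{k,r}\text{ crosses cut}].
\]

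For the average: a scale-$k$ resonance crossing the cut must have at least one of its $\le 4.2L_k$ sites within distance $4.2L_k$ of the cut, so a union bound against Postulate (R-density) gives
\[
\mathbb{P}(\exists r:R_{k,r}\text{ crosses cut})\le 8.4L_k\cdot c\epsilon^{c'\log^2L_k}.
\]
Substituting with $L_k=(15/8)^k$ leaves $\overline{S_{cut}}\le 650c'\chi+C\ln 2\sum_k L_k^{\,2}\epsilon^{c'\log^2L_k}$. The Gaussian-in-$k$ exponent $c'k^2\log^2(15/8)$ dominates the geometric growth of $L_k^{\,2}$ from the first nontrivial $k$, so the series is controlled by its leading term and evaluates to $O(\epsilon)$; careful tracking of the constants should reproduce the stated $4\epsilon\ln 2$. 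For the maximum I would union-bound over the $L$ sites of the chain at each scale: with probability $\to 1$, no scale-$k$ resonance with $c'\log^2L_k\gg\log L$ exists anywhere, capping the largest possible scale at $k_{\max}$ that grows only as a power of $\log L$, so $\max(S_{cut})\lesssim L_{k_{\max}}$ is at worst polylogarithmic in $L$, consistent with the $\log L$ scaling in the statement.

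The principal obstacle I anticipate is justifying the first step: that resonances can be accounted for additively on top of the no-resonance area law, rather than multiplicatively. A priori, the $e^{-A^{(k)}}$ layers interleaved between resonances could amplify the entanglement each resonance deposits. The remedy is Theorem (collar approximation): each $e^{-A^{(k)}}$ is replaced by its truncation to a collar around the cut and around every crossing resonance, at an exponentially small error that is absorbed into the $650c'\chi$ constant. A secondary subtlety is that Postulate (R-density) provides only a one-site marginal probability, but the union bounds used above rely on marginals only and do not require any independence between distinct sites being in $R_k$.
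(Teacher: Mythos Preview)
Your approach is essentially the paper's: split the entropy into the perturbative-layer contribution (already bounded by the area-law theorem) and an additive contribution from resonances crossing the cut, then average the latter using the resonance density and union-bound over positions for the maximum. Two points of contact are worth flagging.

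First, your ``principal obstacle'' is not actually present in the paper's framework. The paper does not strip out the $O_k$'s and then re-insert them; rather, it uses the small-incremental-entropy bound (the Bravyi-type result from \cite{LRBtop}) layer by layer. Each $e^{A^{(k)}}$ contributes its own $\Delta S$ bounded by $c^*\sum_n\|r_n\|$, and each $O_k$ contributes at most $n_r\ln 2$ where $n_r$ is the size of the resonance crossing the cut. These bounds are on $S(\text{after})-S(\text{before})$ for each layer separately, so they are automatically additive---there is no amplification to worry about, and you do not need the collar approximation here.

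Second, the exact constant $4\epsilon\ln 2$ does not fall out of Postulate~(R-density) alone. The paper obtains it by inspecting the first step of Imbrie's construction directly: first-step resonances have size $n_r=3$ with probability $\epsilon$, the smaller side of the cut has one site so the entropy gain is at most $2\ln 2$, and a further factor of $2$ absorbs all higher-$k$ contributions (which decay faster than any polynomial). Your sum $\sum_k L_k^2\,\epsilon^{c'\log^2 L_k}$ is dominated by its $k=1$ term, but the exponent $c'\log^2(15/8)$ there is generically less than $1$, so the abstract postulate by itself yields only $O(\epsilon^{\,c''})$ for some $c''<1$ rather than the sharp $4\epsilon$. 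To match the paper's constant you would need to invoke the same first-step information it does.
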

Here $\overline{S_{cut}}$ is the average over different positions in the system (equivalently, different realizations of disorder), and the true finite size scaling is more complicated than log$L$ but is slower than any power. We note an incredibly small power in the resonance density bound $\epsilon = \gamma^{0.05}$. That means that Imbrie's construction as is cannot be applied for magnetic fields $\gamma\geq 10^{-6}$. MBL survives for much bigger magnetic fields, and this discrepancy is due to multiple overestimations needed for Imbrie's proof to got through. If one uses a more strict choice of which regions to call resonant, their density remains controlled roughly up to the transition to extended phase. We demonstrate it numerically in a companion paper\cite{JeniaGo,phdthesis}.

\section{Time-averaging}
\label{tia}
Consider the following transformation of a local operator $X$:
\begin{equation}
X' =\textrm{lim}_{T\to \infty}1/T \int_0^T e^{iHt} X e^{-iHt} dt;  
\end{equation}
this kills off-diagonal elements in the eigenbasis of $H$, so commutation with $H$ is enforced.
Time-averaging is used in Kim et al\cite{Isaac} as part of a definition of MBL. They show the following for time-averaging with an MBL Hamiltonian:
\begin{lemma}
Infinite time-averaging introduces exponential tails.
\begin{equation}
\|X'-X'_c\| \leq a(4\alpha)^{c}
\end{equation}
where $X'_c$ is the truncation of $X'$ to a region with collar $c$ around the support of $X$, $\alpha = 24c'\chi^ {\frac{56}{225}} $ and $a,Z$ are constant numbers.
\end{lemma}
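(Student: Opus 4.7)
The plan is to diagonalise $H = U^\dag H_{\mathrm{diag}} U$ with the Imbrie unitary, which turns time-averaging under $H$ into diagonal projection in the $\sigma^z$ basis. Since $H_{\mathrm{diag}} = f(\sigma^z)$ is a classical Hamiltonian whose spectrum is generically (almost surely in the disorder) non-degenerate, the infinite time-average $\mathcal{T}$ under $H_{\mathrm{diag}}$ coincides with the diagonal-projection map $D(Y) := \sum_s \langle s|Y|s\rangle\,|s\rangle\langle s|$, so that $X' = U^\dag D(UXU^\dag)\,U$. Two elementary facts about $D$ will carry most of the argument: it is a contraction in operator norm, being an average of unitary conjugations by $\sigma^z$-strings, and it preserves operator support, since the diagonal in a product basis of an operator supported on $R$ still lives on $R$.

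With this reduction in hand, I would apply Theorem~1 twice, once inside the diagonal projection and once outside. Let $Y = UXU^\dag$ and $Y_{c_1} = U_{c_1} X U_{c_1}^\dag$. Theorem~1 combined with contractivity of $D$ gives
\[
\|D(Y) - D(Y_{c_1})\| \leq \|Y - Y_{c_1}\| \leq 24\cdot 2^{S(X)/2}\alpha^{c_1+1}\|X\|,
\]
and since $D$ preserves support, $D(Y_{c_1})$ still lives on the collar-$c_1$ neighbourhood of $\mathrm{supp}(X)$. A second application of Theorem~1, now with $D(Y_{c_1})$ playing the role of the external operator (support $S(X) + 2c_1$, norm bounded by $\|X\|$, outer truncation collar $c_2$), controls the outer conjugation by an error $24\cdot 2^{(S(X) + 2c_1)/2}\alpha^{c_2+1}\|X\|$. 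Defining $X'_c := U_{c_2}^\dag D(Y_{c_1}) U_{c_2}$ with $c_1 + c_2 = c$, the operator $X'_c$ is supported on the prescribed collar, and a single triangle inequality bounds $\|X' - X'_c\|$ by the sum of the two errors.

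The main obstacle is bookkeeping the support prefactor. The factor $2^{S(\cdot)/2}$ in Theorem~1 swells by $2^{c_1}$ when one re-applies the theorem to $D(Y_{c_1})$, producing a term behaving like $2^{c_1}\alpha^{c_2}$ in the outer error that must be beaten against the inner decay $\alpha^{c_1}$. Optimising the split (an essentially equal $c_1 \simeq c_2 \simeq c/2$ works, with a mild bias correcting for the $2^{c_1}$) yields an overall decay of the form $(C\alpha)^c$ with an $O(1)$ constant $C$; the $C = 4$ of the statement is obtained after absorbing the two copies of the collar-approximation prefactor and rounding. All remaining $X$-dependent factors ($2^{S(X)/2}$, $\|X\|$ and numerical constants) are then collected into $a$. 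A minor point to verify separately is the identification $\mathcal{T} = D$: accidental degeneracies of $H_{\mathrm{diag}}$ occur only on a measure-zero subset of Imbrie's continuous disorder distributions, so this holds on the disorder ensemble on which the rest of the argument operates.
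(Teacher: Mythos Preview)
Your overall strategy---writing $X' = U^\dag D(UXU^\dag)U$ with $D$ the diagonal projection in the $\sigma^z$ basis, and sandwiching between two collar approximations---is sound and in fact more direct than the paper's route. The paper instead expands $X$ in the logical basis $\{\tau_s = U\sigma_s U^\dag\}$: the coefficients $c_s = 2^{-L}\mathrm{tr}(U^\dag X U\,\sigma_s)$ decay as $\alpha^{d(s,X)}$ by the telescopic bound, only $\tau^z$-strings survive the infinite time average, and each surviving $\tau^z_s$ carries its own $\alpha^c$ tail; the factor $4$ is the square of the local Hilbert-space dimension, counting operator strings per site.

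There is, however, a genuine gap in your optimisation step. With the split $c = c_1 + c_2$ and two applications of Theorem~1, the error is (up to $X$-dependent prefactors) $\alpha^{c_1} + 2^{c_1}\alpha^{c_2}$. Balancing at $c_1 \simeq c_2 \simeq c/2$ gives $2^{c/2}\alpha^{c/2} = (\sqrt{2\alpha})^c$, \emph{not} $(C\alpha)^c$; and no split does better, since equating the two terms forces the effective exponent on $\alpha$ to $t\to 1/2$ as $\alpha\to 0$. Your claim that the optimum has the form $(C\alpha)^c$ with an $O(1)$ constant $C$ is therefore incorrect: you recover exponential decay in $c$, but with rate $\sqrt{2\alpha}$ rather than the stated $4\alpha$.

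The repair is to replace the two coarse collar bounds by the finer telescopic decomposition of Theorem~\ref{thepain}. Write $UXU^\dag = \sum_j Y_j$ with $\|Y_j\| \lesssim 2^{S(X)/2}\alpha^j\|X\|$ and $S(Y_j)=S(X)+2j$, apply $D$ termwise (which preserves both norms and supports), then expand $U^\dag D(Y_j) U = \sum_m Z_{j,m}$ with $\|Z_{j,m}\| \lesssim 2^{(S(X)+2j)/2}\alpha^m\|Y_j\|$. Resumming over $n=j+m$ gives
\[
\Bigl\|\sum_{j+m=n} Z_{j,m}\Bigr\| \;\lesssim\; 2^{S(X)}\|X\|\,\alpha^n\sum_{j=0}^n 2^j \;\lesssim\; 2^{S(X)}\|X\|\,(2\alpha)^n,
\]
and truncating at $n=c$ yields a tail of order $(2\alpha)^c \leq (4\alpha)^c$, as required.
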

\begin{proof} 
Following\cite{Isaac}, logical spin operators $U\sigma^\alpha_i U^{\dag}$ form an operator basis and obey the telescopic sum bound proven in Appendix {\bf{A}}. Coefficients of an operator $X$ in this basis are given by traces of products tr$XU\sigma^\alpha_i U_\dag$, which are bounded as $c'\alpha^{i-i_X}$ where $i_X$ is the nearest site in $X$. Only the coefficients of combinations of $U\sigma^z_i U^{\dag}$ survive the infinite time averaging. Then each term has a telescopic bound $c'\alpha^{c}$ where $c$ is the size of the collar. Two bounds combine to give a collar bound on $X'$. The $\alpha$ is only increased by a square of local dimension as there are many possible operator strings.
\end{proof} 

If we time-average a local patch of the Hamiltonian $H_a$ in $H= H_a +H_b$  for a finite-time, the commutation with $H$ is not fully enforced. Let's bound the commutator of the two parts:
\begin{align}
    [H_a',H_b'] =[H_a',H] = \frac{1}{T} \int_0^T e^{iHt} [H_a,H] e^{-iHt} dt  =\\
    =\frac{1}{iT} \int_0^T \frac{d}{dt}e^{iHt} H_a e^{-iHt} dt  =\frac{e^{iHT} H_a e^{-iHT} - H_a}{iT}\\
    \| [H_a',H_b']\| \leq \frac{2S}{T}, \label{comBound}
\end{align}
where S is the support of $H_a$. The spread of the local operator of any Hamiltonian can be bounded by the Lieb-Robinson bound. Let the light cone (where the bound on $\|[A(t),B]\| \sim 1$) be given by $x(t)$, therefore time-averaging for time $T$ should give an operator whose support outside the radius $x(T)+\Delta x$ can be bounded as $e^{-c\Delta x}$, where $c$ is a Hamiltonian-independent number for 2-local Hamiltonians (and can be strengthened even more for MBL Lieb-Robinson bound: $e^{-c\Delta x} = \alpha^{\Delta x}$ now $H$-dependant unlike the generic Lieb-Robinson bound). Moreover, the time-averaging can be done with the Hamiltonian truncated to an $x(T)+\Delta x$ collar of the local operator, with only exponentially small error:
\begin{equation}
     {\tilde{H}}_a' =1/T \int_0^T e^{iH_ct} H_a e^{-iH_ct} dt =   H_a' + \delta_a, \quad \|\delta_a\|\leq e^{-c\Delta x}.
\end{equation}
 Note that $\delta$ is a nonlocal interaction that's exponentially small in the distance as $e^{-cd}$. Now split the system in many regions $a_1,a_2 \dots$ So if we take $H_a$ to have support $S>x(T)+\Delta x$, the effects of the time-averaging could be truncated in such a way that ${\tilde{H}}_{a_i}'$ only overlap with their neighbors ${\tilde{H}}_{a_{i\pm1}}'$. So we have reduced our Hamiltonian to a nearest neighbor chain of approximately commuting terms (as in Bound \ref{comBound}), with an exponentially small error:
 \begin{equation}
      H = \sum_i {\tilde{H}}_{a_i}' + \delta_i.
 \end{equation}
Now let's take even-odd pairs ${\tilde{H}}_{a_{2k}}'+ {\tilde{H}}_{a_{2k+1}}'$. We would like to use a result by \citep{twomatonecup} that reduces approximately commuting operators to exactly commuting ones, by small adjustments to each:
\begin{align}
     {\tilde{H}}_{a_{2k}}' =A_{2k}+\epsilon_{2k}, \quad {\tilde{H}}_{a_{2k+1}}' =A_{2k+1}+\epsilon_{2k+1}, \\
     [A_{2k},A_{2k+1}] =0,
\end{align}
and the $\|\epsilon\|<(2S/T)^{1/6}$ and have the same support as the overlap of $a_{2k}'$ and $a_{2k+1}'$ (the supports of corresponding $H'$). Now we introduce the notation
\begin{align}
     h_k = A_{2k} + A_{2k-1},  \quad [h_k,h_{k+1}]=0, \\
     \quad H= \sum_k h_k +\sum_i (\epsilon_i + \delta_i).
\end{align}
We have reduced any local Hamiltonian on a chain to a nonlocal one with exponentially decaying long-range interaction, written in the form of commuting terms plus a small perturbation. We note that to repeat this in 2d one would need a version of Hastings' result for three matrices, and that one has a counterexample presented in his paper. 

For chains, one may think of applying Imbrie's construction to this form of the Hamiltonian - after step 1, Imbrie's Hamiltonian has exactly this form with long range but rapidly decaying interactions! Unfortunately, the bounds for a general 1d Hamiltonian do not work out in our favor: the number of levels in the block $S>x(T)+\Delta x$ grows as $2^S$, so the minimum level distances decrease as $2^{-S}$. However, the norm bound on $\epsilon$ is just a power $1/T$. For the standard Lieb-Robinson bound $x(T) =vT$ and $S>x(T)$, so the level distances are always smaller than the perturbation scale, which is the opposite to what is used in Imbrie's construction. However, in MBL with $x(T) \sim \text{ln}T$, the scales may just work out: $T \sim(1/\alpha)^{x(T)}$, so the norm of the perturbation goes as $\alpha^{\tilde{c}S}$ where $S$ is our choice of the support and $\tilde{c}$ is some constant. It decreases faster than $(1/2)^S$ for a sufficiently small $\alpha$. We have shown the following:

\begin{corollary}
If $H$ obeys MBL Lieb-Robinson bound
\begin{equation}
\| [A,B(t)]\| \leq at\alpha^{d_{A,B}},
\end{equation}
then the perturbation on top of a $S$-local commuting component of $H$
 $V=\sum_i V_i$ is sufficiently small for Imbrie's proof to go through:
\begin{equation}
\|V_i\| \leq c''\alpha^{c'S}
\end{equation}
where $c'',c', S$ are constant numbers.
\end{corollary}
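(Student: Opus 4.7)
My plan is to carry out in full detail the informal construction already sketched in the paragraphs preceding the corollary and then collect the final $S$-dependence of the two error sources. First I would partition the chain into blocks $a_1, a_2, \ldots$ of width slightly below $S$, leaving a buffer, and time-average each local block $H_{a_i}$ against the truncated Hamiltonian $H_c$ supported on a collar of width $x(T)+\Delta x$ around $a_i$. The MBL Lieb--Robinson bound controls the truncation error by $\|\delta_i\|\leq a T\alpha^{\Delta x}$, while the derivative identity
\begin{equation*}
[H_{a_i}',H] \;=\; \frac{e^{iHT}H_{a_i}e^{-iHT}-H_{a_i}}{iT}
\end{equation*}
gives $\|[\tilde H_{a_i}',\tilde H_{a_j}']\|\leq 2S/T$. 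The block size is chosen just large enough that, after truncation, each $\tilde H_{a_i}'$ overlaps only with its nearest neighbors.

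Next I would apply Hastings' two-matrix lemma (twomatonecup) pairwise in an even--odd pattern: each approximately commuting pair $(\tilde H_{a_{2k}}',\tilde H_{a_{2k+1}}')$ is replaced by an exactly commuting pair $(A_{2k},A_{2k+1})$ at cost $\|\epsilon_i\|\leq (2S/T)^{1/6}$, with $\epsilon_i$ supported on the overlap region of the two blocks. Grouping $h_k=A_{2k}+A_{2k-1}$ yields exactly commuting $S$-local terms, and the residual perturbation is $V=\sum_i(\epsilon_i+\delta_i)$, each $V_i$ supported locally around site $i$.

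The balance of scales is the core of the argument and uses the MBL lightcone $x(T)=\log_{1/\alpha}(aT)/\mathrm{const}$. I would set $T=\alpha^{-cS}$ so that $x(T)+\Delta x\leq S$ with $\Delta x$ also linear in $S$. Then the truncation error is $\alpha^{c_1 S}$ (the prefactor $aT$ is absorbed by slightly shrinking the exponent), and the Hastings error is $(2S/T)^{1/6}\leq \alpha^{c_2 S}$ up to logarithmic factors in $S$ that are likewise absorbed. Taking $c'=\min(c_1,c_2)$ and combining the prefactors into a single constant $c''$ gives the claimed bound $\|V_i\|\leq c''\alpha^{c'S}$.

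The main obstacle I anticipate is verifying that the resulting $\alpha^{c'S}$ is small enough for Imbrie's scheme to actually proceed: Imbrie's perturbative step requires the perturbation to be smaller than the local level spacing of the commuting block, which is $\sim 2^{-S}$. So one needs $\alpha^{c'S}<2^{-S}$, i.e.\ $\alpha<2^{-1/c'}$, a nontrivial smallness condition that must be tracked back to a smallness condition on $\gamma$. A second, more bookkeeping-type obstacle is making sure the Hastings corrections inherit the support of the pairwise overlap so that the final $h_k$ are genuinely $S$-local and not spread over several blocks; this requires using the support-preserving version of Hastings' construction rather than the generic one.
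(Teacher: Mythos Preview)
Your proposal is correct and follows essentially the same route as the paper: time-average local blocks, truncate via the MBL Lieb--Robinson bound, pair up approximately commuting neighbors with Hastings' two-matrix lemma to obtain exactly commuting $S$-local terms, and then balance the $(2S/T)^{1/6}$ and $\alpha^{\Delta x}$ errors using the logarithmic lightcone $x(T)\sim\ln T$ to get $\alpha^{c'S}$. Your identification of the smallness condition $\alpha^{c'S}<2^{-S}$ (i.e.\ $\alpha$ small enough) as the key threshold for Imbrie's scheme to proceed is exactly the point the paper makes as well.
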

In this way, any Hamiltonian that possesses the MBL Lieb-Robinson bound can be reduced to a form "diagonal + small perturbation", even if the z-basis is not apparent from the start.
\nocite{*}
\bibliography{main}

\appendix
\onecolumngrid
\section{Proofs of properties of Imbrie Circuit}
\subsection{Telescopic sum for Imbrie circuit}
\subsubsection{Small unitary expansion} We consider a generator $A^{(k)}$  of a unitary evolution $e^{A^{(k)}}$ (corresponding to step $k$) such that $A^{(k)}= \sum_i A_i^{(k)}$ where $\|A_i^{(k)}\| \leq c' \chi^{\textrm{ceil}L_{k-1}}$) and each $A_i$ is supported on floor$(\frac{8}{7}L_{k+1})$ consecutive sites. We would like to study the structure of a local operator $X$, evolved as $e^{A^{(k)}} X e^{-A^{(k)}}$. For the purposes of the proofs below, we use the term support $S(X)$ of any operator $X$ as the shortest interval of the chain containing all the sites where action of $X$ is distinct from identity. We prove

\begin{lemma}
\label{supercharge}
\begin{equation}
e^{A^{(k)}} X e^{-A^{(k)}}  = \sum_{j=0}^\infty X_j
\end{equation}
where  $ S(X_j) \leq S(X) +  j\Delta S $,  $\Delta S =2(\textrm{floor}\frac87 L_{k+1}-1)$  and 
\begin{equation}
\| X_j \|\leq 2^{j_0}(2c'\Delta S )^j \chi^{j\textrm{ceil}L_{k-1}} \|X\| 
\end{equation}
where  $j_0 = \textrm{ceil}\frac{S(X) +S(A) -2}{\Delta S}$ and $S(A) = \textrm{floor}\frac87 L_{k+1}$
\end{lemma}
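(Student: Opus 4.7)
The approach is to expand the conjugation as a formal Baker--Campbell--Hausdorff series,
\begin{equation*}
e^{A^{(k)}} X e^{-A^{(k)}} = \sum_{n \geq 0}\frac{1}{n!}\,\mathrm{ad}_{A^{(k)}}^{n}(X),
\end{equation*}
substitute $A^{(k)} = \sum_i A_i^{(k)}$ into every nested commutator, and then regroup by the support of the resulting operator. A string $[A_{i_1}^{(k)},[A_{i_2}^{(k)},\ldots,[A_{i_n}^{(k)},X]\ldots]]$ vanishes unless every $A_{i_l}^{(k)}$ overlaps the operator beneath it in the tower; when nonzero, its image lives on the interval formed by successive unions, which grows by at most $S(A)-1$ sites per commutator. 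Defining $X_j = \frac{1}{j!}\,\mathrm{ad}_{A^{(k)}}^{j}(X)$ therefore immediately gives the support bound $S(X_j)\leq S(X) + j(S(A)-1)\leq S(X)+j\Delta S$.

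For the norm bound I would combine, at each commutator step, the elementary inequality $\|[A_{i_l}^{(k)},Y]\|\leq 2\|A_{i_l}^{(k)}\|\|Y\|$, the Norm postulate $\|A_i^{(k)}\|\leq c'\chi^{\textrm{ceil}\,L_{k-1}}$, and the counting observation that the number of $A_i^{(k)}$ overlapping an interval of size $s$ is at most $s+S(A)-1$. Writing $\beta = S(A)-1$ and $a = S(X)/\beta$, iteration yields
\begin{equation*}
\|\mathrm{ad}_{A^{(k)}}^{n}(X)\| \leq (2 c'\chi^{\textrm{ceil}\,L_{k-1}})^{n}\prod_{l=1}^{n}(S(X)+l\beta)\,\|X\|,
\end{equation*}
and the rising-factorial identity $\prod_{l=1}^{n}(a+l) = n!\binom{a+n}{n}$ together with $\binom{a+n}{n}\leq 2^{a+n}$ shows that the $1/n!$ factor from BCH cancels the overcounting, leaving the $n$-th term bounded by $2^{a}(2c'\Delta S)^{n}\chi^{n\,\textrm{ceil}\,L_{k-1}}\|X\|$. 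Absorbing the $2^{a}$ into $2^{j_0}$ with $j_0 = \textrm{ceil}\frac{S(X)+S(A)-2}{\Delta S}$ and identifying $j=n$ gives the claimed estimate.

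The main obstacle will be the careful combinatorial bookkeeping around the identification of the shell index $j$ with the BCH order $n$ and the exact form of the prefactor $2^{j_0}$. The choice $\Delta S = 2(S(A)-1)$, rather than the tighter $S(A)-1$, is precisely the slack that accommodates this identification together with the loose bound $\binom{a+n}{n}\leq 2^{a+n}$ on the binomial coefficient. No further structural input beyond the Support and Norm postulates is required; once the counting is in place the bound follows from the rising-factorial identity and elementary algebra.
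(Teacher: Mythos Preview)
Your approach is the paper's: expand $e^{A}Xe^{-A}$ as the adjoint series, set $X_j=\tfrac{1}{j!}\,\mathrm{ad}_A^{\,j}(X)$, count overlapping $A_i$'s at each level, and collapse the resulting rising factorial into a binomial bounded by $2^{j+j_0}$. There is, however, a concrete slip in the support accounting that propagates. For a \emph{single} string $[A_{i_1},[\dots,[A_{i_n},X]]]$ each commutator enlarges the interval by at most $S(A)-1$, but $X_j$ is the \emph{sum} over all strings of length $j$, and different strings extend the interval on opposite ends. Hence $S(X_j)=S(X)+j\cdot 2(S(A)-1)=S(X)+j\Delta S$, not $S(X)+j(S(A)-1)$; your intermediate inequality is false, even though the final bound $\leq S(X)+j\Delta S$ survives for trivial reasons.

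This undercount invalidates the norm estimate. In the iteration, the number of $A_i$ overlapping $\mathrm{ad}_A^{\,l-1}(X)$ must use the true support $S(X)+(l-1)\Delta S$, so that $n_l \approx S(X)+(l-1)\Delta S + (S(A)-1)$ rather than your $S(X)+l\beta$ with $\beta=S(A)-1$. Consequently $\prod_l(S(X)+l\beta)$ is not an upper bound, and the ``absorb $2^{a}$ into $2^{j_0}$'' step fails outright: with $a=S(X)/\beta$ and $j_0=\lceil(S(X)+S(A)-2)/\Delta S\rceil$ one has $a\approx 2j_0$, so $2^{a}\not\leq 2^{j_0}$ whenever $S(X)>S(A)-1$. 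The fix is simply to run your rising-factorial argument with step size $\Delta S$ instead of $\beta$: then $\tfrac{1}{j!}\prod_l n_l\leq \Delta S^{\,j}\binom{j_0+j}{j}\leq (2\Delta S)^{j}2^{j_0}$, exactly as the paper does. (The paper, incidentally, silently drops the commutator factor of $2$ that you carefully retain; including it would honestly give $(4c'\Delta S)^j$ rather than the stated $(2c'\Delta S)^j$.)
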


So the operator $X$ acquires exponential tails after conjugating with $e^{A^{(k)}}$.

\begin{proof}

We drop the superscript $(k)$ in $A^{(k)}$ for convenience.
\begin{equation}
e^A X e^{-A} = \sum_n \frac{A^n}{n!} X \sum_m \frac{(-A)^m}{m!}
\end{equation}
collecting the terms that have the same power of $A$
\begin{equation}
=\sum_j \sum_{m=0}^k \frac{A^{j-m}}{(j-m)!} X \frac{(-A)^m}{m!}
\end{equation}
Compare this to the expansion of $j$'th order commutator
\begin{equation}
[A[A\dots[A,X]]]^{(j)} = \sum_{m=0}^j C_m^j A^{j-m} X (-A)^m
\end{equation}
where $C_m^j = \frac{j!}{(j-m)! m!}$
\begin{equation}
\sum_j \sum_{m=0}^j \frac{A^{j-m}}{(j-m)!} X \frac{(-A)^m}{m!} = \sum_j \frac{1}{j!}[A[A\dots[A,X]]]^{(j)}
\end{equation}
Now denote $X_j =  \frac{1}{j!}[A[A\dots[A,X]]]^{(j)}$. We found that
\begin{equation}
e^A X e^{-A}  = \sum_j X_j
\end{equation}

Recall that support of $X$ is $S(X)$ consecutive sites and the support of $A$ is $S(A) = \textrm{floor}\frac87 L_{k+1}$. A commutator $[A,O]$ for a local operator $O$ of support $S(O)$ consecutive sites is nonzero for only the terms in $A = \sum_i A_i$ that overlap with $O$. The index $i$ goes over every site, so there are $S(O) + S(A) -2$ such terms. The support of $[A,O]$ becomes $S(O) + 2S(A)-2$.  The support of $X_j$ which is the $j$'th order of the commutator just grows linearly:
\begin{equation}
S(X_j) =S(X) + j(2S(A)-2) = S(X) + j\Delta S
\end{equation}
where we denoted $\Delta S = 2S(A) -2$. The number of terms of $A$ that have nonzero commutator at $j$'th step is given by $S(O) +S(A)-2$ for $O=X_{j-1}$:
\begin{equation}
n_j = S(X_{j-1}) + S(A) -2 = S(X) +j\Delta S +S(A) -2
\end{equation}
For the norm bound on $X_j$, a prefactor before the $j$'th power of $\|A\|$ is $1/j!$ and a product of all $n_j$'s:
\begin{equation}
 \|X_j\|\leq \frac{1}{j!} \left(\prod_{i=1}^j n_i \right)\|A\|^j \|X\|
\end{equation}
Let us bound the prefactor:
\begin{align}
 \frac{1}{j!} \prod_{i=1}^j n_i = \frac{1}{j!} \prod_{i=1}^j (S(X) +S(A) -2+i\Delta S ) =\frac{\Delta S^j}{j!} \prod_{i=1}^j \left(\frac{S(X) +S(A) -2}{\Delta S}+i \right) \leq \\
 \leq \frac{\Delta S^j}{j!} \prod_{i=1}^j \left(\textrm{ceil}\frac{S(X) +S(A) -2}{\Delta S}+i \right) =  \frac{\Delta S^j}{j!} \prod_{i=1}^j (j_0+i ) = \Delta S^j \frac{(j+j_0)!}{j!j_0!} = \Delta S^j C_{j_0}^{j+j_0} 
\end{align}
where we denoted $j_0 = \textrm{ceil}\frac{S(X) +S(A) -2}{\Delta S}$. Finally we use $C_{j_0}^{j+j_0}  \leq 2^{j+j_0}$ to arrive at
\begin{equation}
 \frac{1}{j!} \prod_{i=1}^j n_i \leq (2\Delta S)^j 2^{j_0}
\end{equation}
Combining that with the bound on $\|A\|$, we get the desired form:
\begin{equation}
\| X_j \|\leq 2^{j_0}(2c'\Delta S )^j \chi^{j\textrm{ceil}L_{k-1}} \|X\| 
\end{equation}

\end{proof}

\paragraph{Aside: comparison with Lieb-Robinson Bound}
We have found that if one conjugates a local operator $X$ with $S(A)$-locally generated 1d evolution $e^A$ for small time $t\leq c'\chi^{\textrm{ceil}L_{k-1}}$, operator $X$ acquires exponential tails.

It is a much stronger result than if we apply Lieb-Robinson Bound for small $\chi$ for propagation of tails beyond the collar, but becomes trivial for $\chi=1$. Here's this LRB estimate for comparison - it still gives nontrivial answers even for big $\chi$:
 \begin{equation}
\sum_{j=c}^\infty\|X_j \|\leq  \textrm{exp}\{-(ac -  S(A)c'\chi^{\textrm{ceil}L_{k-1}}\} \|X\|
\end{equation}
Here $a$ is just some number $<20$ much smaller than $\textrm{ceil}L_{k-1}/(-\textrm{ln}\chi)$.
We shall use the stronger first bound as in our case $\chi\ll 1$.

\subsubsection{Infinite product of unitaries} Now we'd like to use our bound to estimate precision of collar approximations to:
\begin{equation}
\prod_{k=1} e^{A^{(k)}} X \prod_{k'=1} e^{-A^{(k')}}
\end{equation}
 as in the total Imbrie circuit without resonances. The order in the product will turn out to be unimportant for our bound. From the point of view of the physical meaning, if smaller $k$ act on the operator $X$ first, then $X$ is a physical operator and evolved $X$ represents the action on "logical" spins, or integrals of motion. In particular, for the diagonalization of the Hamiltonian, smaller $k$ act on the physical Hamiltonian first. If smaller $k$ act last, then $X$ is a logical operator, and its evolved form will be the representation of such operator on physical spins.

 We now present and prove the bound:
 
\begin{theorem}
\label{thepain}
\begin{equation}
 \prod_{k=1}^{\infty} e^{A^{(k)}} X \prod_{k'=1}^\infty e^{-A^{(k')}} = \sum_{j=0}^{\infty} X_j, \quad S(X_j) =S(X) +2j, \quad \|Y_n\| \leq 6\cdot 2^{0.5S(X)}\alpha^j\| X\|
\end{equation}
where $\alpha = 24c'\chi^ {\frac{56}{225}} $ and the order in the product is unimportant.

 If we add a collar of $c$ sites to the support of $X$, and take $A_{c}^{(k)}$ to be the sum of terms fully within the collared region, then 
\begin{equation}
\left\|\prod_{k=1}^{\infty} e^{A^{(k)}} X \prod_{k'=1}^\infty e^{-A^{(k')}} - \prod_{k=1}^{\infty} e^{A^{(k)}_c} X \prod_{k'=1}^{\infty}  e^{-A^{(k')}_c}\right\| \leq 24\cdot 2^{0.5S(X)}\alpha^{c+1}\|X\|
\end{equation}
 $\alpha\leq 0.5$ is required for the second bound to hold.
\end{theorem}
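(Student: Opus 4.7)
My plan is to iterate Lemma \ref{supercharge} across all levels $k=1,2,\ldots$ and regroup the resulting multi-indexed sum by total support growth. After applying $e^{A^{(k)}}$ at every level, an initial operator $X$ becomes a sum indexed by tuples $(j_1,j_2,\ldots)$ with only finitely many nonzero entries; the corresponding piece has support growth $G=\sum_k j_k\,\Delta S_k$ (with $\Delta S_k = 2(\lfloor\tfrac{8}{7}L_{k+1}\rfloor -1)$) and norm bounded by
\begin{equation}
\|X\|\prod_k 2^{j_{0,k}}\bigl(2c'\Delta S_k\bigr)^{j_k}\chi^{j_k\lceil L_{k-1}\rceil}.
\end{equation}
Reindexing by $j = G/2$ produces the $X_j$ of the statement, with support $S(X)+2j$.

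The central algebraic step is to show that the per-step cost factors combine into $\alpha^{G/2}$. Because $L_{k-1}/\Delta S_k \to (8/15)^2\cdot 7/16 = 28/225$ (with the floors and ceilings contributing only subleading corrections that can be absorbed into the constant), one has $\chi^{L_{k-1}} \le \alpha^{\Delta S_k/2}/(2c'\Delta S_k)$ for every $k$, provided $\alpha=24c'\chi^{56/225}$ and $\Delta S_k$ is at least modestly large (the few small-$k$ levels can be handled by slackness in the constant $24$). This is precisely the rescaling identity that the choice $L_k=(15/8)^k$ was designed to deliver. Substituting, the norm factor per multi-index reduces to $\alpha^{G/2}\prod_k 2^{j_{0,k}}$.

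Next I bound the prefactor product. Since $j_{0,k} = \lceil (S_k+S(A_k)-2)/\Delta S_k\rceil$ and $S_k \le S(X)+G$, one has $j_{0,k} \le (S(X)+G)/\Delta S_k + 1$ whenever $j_k>0$. Because $\Delta S_k$ grows geometrically, $\sum_k 1/\Delta S_k$ is bounded by a small constant, so $\sum_k j_{0,k} \le 0.5\,S(X) + c_1 G + c_2$ (after tuning the constant in $\alpha$). This gives the factor $2^{0.5S(X)}$ cleanly, while the $G$-dependent contribution $2^{c_1 G}$ can be absorbed into $\alpha$ by verifying $24c'\chi^{56/225}\cdot 2^{2c_1}\le \alpha$ still holds for small $\chi$. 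Finally one sums over multi-indices with fixed $G=2j$; the number of such compositions is geometric in $G$ (again controlled by $\sum 1/\Delta S_k<\infty$) and absorbs into a benign combinatorial factor $\le 6$.

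For the collar statement, note that $e^{A^{(k)}}-e^{A^{(k)}_c}$ acts on $X$ only through those $A^{(k)}_i$ whose support strays outside the collar, and any multi-index containing such a term necessarily has $G>c$. Hence the error is majorized by $\sum_{j>c/2}\|X_j\|$, which is a geometric tail $\sum_{j\ge c/2}6\cdot 2^{0.5S(X)}\alpha^j\|X\|\le 24\cdot 2^{0.5S(X)}\alpha^{c+1}\|X\|$ under $\alpha\le 1/2$. The main obstacle is the third step: careful bookkeeping of the prefactor $2^{j_{0,k}}$ and making sure the $G$-dependence genuinely fits into $\alpha^{G/2}$ without spoiling the $2^{0.5S(X)}$ coefficient; this is where the specific exponent $56/225$, rather than the naive $64/225 = (L_{k-1}/L_{k+1})$, is forced on us, and getting the constants $6$, $24$, $24c'$ compatible requires the mild condition $\alpha\le 1/2$ that appears in the statement.
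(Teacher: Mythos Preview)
Your proposal is correct and follows essentially the same route as the paper: iterate Lemma~\ref{supercharge} to a multi-index expansion $X_{\{j_k\}}$, then bound separately the $\chi$-power (via the ratio $L_{k-1}/(S(A_k)-1)=(8/15)^2\cdot 7/8=56/225$), the accumulated prefactor $\prod_k 2^{j_{0,k}}$ (via the convergent geometric sum $\sum_k 1/\Delta S_k$), and the number of multi-indices with fixed support growth, finally summing a geometric tail for the collar estimate. Two small bookkeeping corrections: a term $A_i^{(k)}$ not fully inside the $c$-collar forces $j\ge c+1$ (equivalently $G\ge 2(c+1)$), not $G>c$, so the tail sum is $\sum_{j\ge c+1}$ and this is what delivers $\alpha^{c+1}$; and the paper bounds the difference by $\sum_{j\ge c+1}(\|X_j\|+\|\tilde X_j\|)\le 2\sum_{j\ge c+1}\|X_j\|$ since the collared expansion obeys the same telescopic bound, which accounts for the factor $24=2\cdot 12$ in the final constant.
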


\begin{proof}

 We start from looking at individual rotation, that was discussed above:
 \begin{equation}
U_k X U_k^\dag  = \sum_{j_k} X_{j_k}, ~ ~  S(X_{j_k})  =  S(X) +2j_k\textrm{floor}(\frac87 L_{k+1}-1),  ~ ~  \| X_{j_k} \|\leq 2^{j_0}(2c'\Delta S )^j \chi^{j\textrm{ceil}L_{k-1}} \|X\|  
\end{equation}
 If there are two rotations, we get:
 \begin{align}
U_{k'} U_k X U_k^\dag U_{k'}^\dag = \sum_{j_k,j_k'} X_{j_k,j_k'}, \\  S( X_{j_k,j_k'})  =  2j_k\textrm{floor}\left(\frac87 L_{k+1}-1\right) + 2j_k'\textrm{floor}\left(\frac87 L_{k'+1}-1\right),  \\ \| X_{j_k,j_k'} \|\leq 2^{j_{0,k} + j_{0,k'}(j_k)}(2c'\Delta S_k )^{j_k}(2c'\Delta S_k' )^{j_k'} \chi^{j_k\textrm{ceil}L_{k-1} +j_k'\textrm{ceil}L_{k'-1} } \|X\|  
\end{align}
we see that the order of unitaries indeed does not matter except for the first term, that depends on the support of previous terms. We will see in paragraph \ref{geomm} how that term will remain bounded for every $X_{j_k,j_k'}$ s.t. $S(X_{j_k,j_k'}) =S(X) +2j$ regardless of the order. For an infinite product, the components of evolved $X$ are indexed by a string of integers $\{j_k\}$ for all $k=1,2,3\dots$: 
\begin{align}
\prod_{k=1} e^{A_k} X \prod_{k'=1} e^{-A_{k'}} = \sum_{\{j_k\}} X_{\{j_k\}}, \\  S(X_{\{j_k\}})  = S(X)+  \sum_{k}2j_k\textrm{floor}\left(\frac87 L_{k+1}-1\right) ,  \\  \| X_{\{j_k\}} \|\leq C(\{j_k\},\textrm{order}) \left(\prod_{k} (2c'\Delta S_k )^{j_k}\right) \chi^{\sum_{k}j_k\textrm{ceil}L_{k-1}} \|X\| \\
C(\{j_k\},\textrm{order}) =2^{j_{0,k(1)} + j_{0,k(2)}(j_{k(1)}) +\dots}
\end{align}
Terms $X_{\{j_k\}} $ whose resulting supports $S( X_{\{j_k\}})=S(X) +2j$ ($j\in {Z}$)  are combined into $X_j$:
\begin{equation}
X_j = \sum_{\{j_k\}|S( X_{\{j_k\}})=S(X) +2j}X_{\{j_k\}}
\end{equation}
We will provide a uniform bound in terms of $j$ on the norm of each $X_{\{j_k\}} $ in $X_j$, and also argue that the total number of terms is only exponential in $j$.

One of the terms contributing to$\| X_{\{m_k\}|j} \|$ is  
\begin{equation}
 A =\prod_{k}(2c'\Delta S_k )^{j_k}
\end{equation}
We know that $\Delta S_k = 2\textrm{floor}(\frac87 L_{k+1}-1)$. We will leave one of the powers of two to the $C(\{j_k\},\textrm{order})$, and bound a 
\begin{equation}
 A =\prod_{k} \left(2c'\textrm{floor}\left(\frac87 L_{k+1}-1\right)\right)^{j_k} \leq \prod_{k} \left(\frac{16}{7}c' L_{k+1}\right)^{j_k}
\end{equation}
The collar is increased in steps of $S(A_k)-1 =$floor$\frac87 L_{k+1} -1$ on each side, so 
\begin{equation}
j= \sum_k (S(A_k)-1)j_k
\end{equation}
We can collect two sums over $j_k$ in our bound:
\begin{equation}
 A \leq \left(\frac{16}{7}c'\right)^{\sum_k j_k} \left(\frac{15}{8}\right)^{\sum_k (k+1)j_k}
\end{equation}
Note that
\begin{equation}
 \sum_k j_k \leq \sum_k (k+1)j_k \leq \sum_k (S(A_k)-1)j_k = j
\end{equation}
so assuming $c'>1$ (or else it can be replaced by $1$ in the original bound):
\begin{equation}
 A \leq \left(\frac{30}{7}c' \right)^{j}
\end{equation}
That's a good enough bound for our purposes. Now let's turn our attention to powers of $\chi$:

\begin{equation}
 \chi^{\sum_{k}j_k\textrm{ceil}L_{k-1}} \end{equation}
We'd like to bound it by a $j$-dependent expression as well. $\chi<1$, so we need to bound the sums in the opposite direction:
\begin{align}
 \sum_{k}\textrm{ceil}L_{k-1}j_k \geq \sum_{k}L_{k-1}j_k  = \left(\frac{8}{15}\right)^2\frac{7}{8}\sum_{k}\frac{8}{7}L_{k+1}j_k \geq \\\geq \left(\frac{8}{15}\right)^2\frac{7}{8}\sum_{k}\textrm{floor}(\frac{8}{7}L_{k+1}-1)j_k = \left(\frac{8}{15}\right)^2\frac{7}{8} j
\end{align}
So the resulting power of $\chi$ is at least
\begin{equation}
 \chi^{\sum_{k}j_k\textrm{ceil}L_{k-1}} \leq \chi^ {\left(\frac{8}{15}\right)^2\frac{7}{8} j}
\end{equation}
Finally, we deal with the order-dependent factor:

\paragraph{$2^{j+j_0}$ factor}\label{geomm}
The factor $2^{j_0}$ might be problematic for the proofs involving the full circuit, as $X$ is oftentimes a big object, and the earlier steps expand it even bigger.

For the proof above, we needed the telescopic sum bound:
\begin{equation}
\to\| X_{j_k} \|\leq 2^{{j_k}+j_{0,k}}(c'\Delta S )^{j_k} \chi^{j_k\textrm{ceil}L_{k-1}} \|X\| 
\end{equation}
For convenience, we pulled out $2^{j}$. All factors except the first are independent on the multiplication order and their contributions were bounded above. We will show that the product of $2^{{j_k}+j_{0,k}}$ over all steps is bounded by $M^{j}$ for some constant $M$. The bound is not multiplication order-dependent.

The contribution of terms $2^{{j_k}+j_{0,k}}$ to a single representative of the $X_j$ is:
\begin{equation}
 \prod_{k=1}^{k_{max}} 2^{j_k+j_{0,k}} = 2^{\sum_{k=1}^{k_{max}}j_k+j_{0,k}}
\end{equation}
 Now we need to bound the sum:
\begin{equation}
 \sum_{k=1}^{k_{max}}j_k+j_{0,k} =  \sum_{k=1}^{k_{max}}j_k+\textrm{ceil} \frac{S(X_{k-1}) + S(A_k)-2}{2S(A_k) -2}
\end{equation}
where we have introduced $X_{k-1}$ as the term that we took from the previous step. The sequence of supports $S(X)=S(X_0),S(X_1)\dots S(X_{k_{max}})=2j + S(X_0) $ should be monotonic. $S(A_k) =$floor$\frac87 L_{k+1}$. Finally, $j_k$ is the number of terms in the commutator we selected at step $k$, so $S(X_k) = S(X_{k-1}) + j_k(2S(A_k)-2)$, so $j_k = (S(X_k) - S(X_{k-1})) /(2S(A_k)-2) $
\begin{equation}
 \sum_{k=1}^{k_{max}}j_k+j_{0,k} \leq  \sum_{k=1}^{k_{max}} \frac{S(X_{k}) + S(A_k)-2}{2S(A_k) -2} +1 \leq \frac32 k_{max} +(S(X_{k_{max}})-1) \sum_{k=1}^{k_{max}} \frac{1}{2S(A_k) -2} 
\end{equation}
The series at the end is bounded:
\begin{equation}
 \sum_{k=1}^{k_{max}} \frac{1}{2S(A_k) -2}  = \frac{1}{2}\sum_{k=1}^{k_{max}}\frac{1}{\textrm{floor}\frac{8}{7}L_{k+1} -1} \leq \frac{1}{2}\sum_{k=1}^{\infty}\frac{1}{\frac{8}{7}L_{k+1} -2}\leq \frac{1}{2}\sum_{k=1}^{\infty}\frac{1}{\frac{8}{7}L_{k} } = \frac{7}{30} \frac{1}{1-\frac{8}{15}}= \frac12
\end{equation}
So we arrive at a bound:
\begin{equation}
 \sum_{k=1}^{k_{max}}j_k+j_{0,k} \leq  j + \frac{3 k_{max}  + S(X_0) -1}{2} 
\end{equation}
we note that $k_{max}\sim $ln$j$, so we indeed only get an extra power of $j$, which will be suppressed by a small enough $\chi$. Let's recall the specific dependence of $k_{max}$ on $j$. $X_0=X$, then $X_1$ is everything within a one site collar, $X_j$ - within a $j$-site long collar. The collar is increased in steps of $S(A_k)-1 =$floor$\frac87 L_{k+1} -1$ on each side, so floor$\frac87 L_{k_{max}+1} -1\leq j$:
\begin{align}
 k_{max} = \textrm{max}_k \textrm{ s.t. }\textrm{floor}\frac87 L_{k+1} -1\leq j\\
 \frac87 L_{k_{max}+1} -2\leq j, \quad 
 (15/8)^{k_{max}}=L_{k_{max}} \leq\frac{7}{15}(j+2)  \\
 k_{max}\textrm{ln}(15/8)\leq \textrm{ln} (j+2) + \textrm{ln}(7/15) \quad k_{max}\leq \frac{\textrm{ln} ((7/15)(j+2))}{\textrm{ln}(15/8)} \leq  \frac{ (7/15)(j+2)}{\textrm{ln}(15/8)}
\end{align}
since $\frac{7}{15}(j+2)>1$. So our previous bound becomes:
\begin{equation}
 \sum_{k=1}^{k_{max}}j_k+j_{0,k} \leq  \frac{\left(2 + \frac{7}{5\textrm{ln}(15/8)}\right) j  + \frac{14}{5\textrm{ln}(15/8)}-1 +S(X_0)}{2} \leq 2.2 j + 1.8+0.5 S(X_0)
\end{equation}
And the prefactor is bounded as:
\begin{equation}
 \prod_{k=1}^{k_{max}} 2^{j_k+j_{0,k}}\leq 2^{2.2 j + 1.8+0.5 S(X_0)} = (2^{2.2})^j 2^{1.8 +0.5S(X_0)}
\end{equation}

Now putting it all together, we've learned that
\begin{equation}
 \|X_{\{j_k\}}\| \leq (2^{2.2})^j 2^{1.8 +0.5S(X_0)}  \left(\frac{30}{7}c' \right)^{j} \chi^ {\left(\frac{8}{15}\right)^2\frac{7}{8} j} \|X\|
\end{equation}
for $X_{\{j_k\}} \in X_j$. To bound $X_j = \sum_{\{j_k\}|j}X_j$ we need to upper bound the number of terms in the sum.

\paragraph{Number of terms}

We need to estimate the number of terms with collar $j$ that are generated in our telescopic sums $ \sum_{\{j_k\}} X_{\{j_k\}}$. First note that terms with $j_k\ne 0$ for at least one $k>k_{max}$ do not contribute. We're bounding the number of ways 
\begin{equation}
 \sum_k j_k \left(\textrm{floor} \frac87 L_{k+1}-1\right) =j
\end{equation}
can be achieved. Note that for each $j_k$
\begin{equation}
 j_k \left(\textrm{floor} \frac87 L_{k+1}-1\right) \leq j
\end{equation}
So there are at most $j/\left(\textrm{floor} \frac87 L_{k+1}-1\right)$ choices for each $j_k$ up to $k_{max}$. This gives an upper bound on the number of terms:
\begin{equation}
\sum_{\{j_k\}|j}1 \leq \prod_{k=1}^{k_{max}}\frac{j}{\textrm{floor} \frac87 L_{k+1}-1} = \frac{j^{k_{max}}}{\prod_{k=1}^{k_{max}}\textrm{floor} \frac87 L_{k+1}-1}
\end{equation}
For the upper bound on $ \sum_{\{j_k\}|j}1$, we need to lower bound the denominator:
\begin{equation}
 \prod_{k=1}^{k_{max}}\textrm{floor} \frac87 L_{k+1}-1 \geq  \prod_{k=1}^{k_{max}} L_{k} = \left(\frac{15}{8}\right)^{\sum_{k=1}^{k_{max}}k} = \left(\frac{15}{8}\right)^{\frac{k_{max}(k_{max}+1)}{2}}
\end{equation}
We obtain the bound on the number of terms as:
\begin{equation}
 \sum_{\{j_k\}|j}1 \leq j^{k_{max}}\left(\frac{8}{15}\right)^{\frac{k_{max}(k_{max}+1)}{2}}
\end{equation}
Note that we will need the opposite inequality on $k_{max}$:
\begin{align}
 k_{max} = \textrm{max}_k \textrm{ s.t. }\textrm{floor}\frac87 L_{k+1} -1\leq j\\
 \textrm{floor} \frac87 L_{k_{max}+2}-1 >j\\
 L_{k_{max}+2} > \frac78 j\\
\left(\frac{15}{8}\right)^{(k_{max}+1)} > \frac{7}{15}j
\end{align}
So
\begin{equation}
 \sum_{\{j_k\}|j}1 \leq j^{k_{max}}\left(\frac{7}{15}j\right)^{-\frac{k_{max}}{2}} = \left(\frac{15}{7}j\right)^{\frac{k_{max}}{2}}
\end{equation}
Using the inequality $ k_{max}\leq \frac{\textrm{ln} ((7/15)(j+2))}{\textrm{ln}(15/8)} $ derived before, we obtain:
\begin{equation}
 \sum_{\{j_k\}|j}1 \leq \textrm{exp}\left(\textrm{ln}\left(\frac{15}{7}j\right){\frac{k_{max}}{2}}\right) \leq  \textrm{exp}\left(\textrm{ln}\left(\frac{15}{7}j\right)\frac{\textrm{ln} ((7/15)(j+2))}{2\textrm{ln}(15/8)}\right)
\end{equation}
Expression on the right-hand side goes as exp(ln$^2j$), so it is clearly bounded by some power of $e^j$. To find the specific factor, we bound
\begin{align}
 \textrm{ln}\sum_{\{j_k\}|j}1 \leq\textrm{ln}\left(\frac{15}{7}j\right)\frac{\textrm{ln} ((7/15)(j+2))}{2\textrm{ln}(15/8)} = \frac{\textrm{ln}j \textrm{ln}(j+2) + \textrm{ln}\frac{15}{7}\textrm{ln}\left(1 +\frac{2}{j}\right) -  \textrm{ln}^2\frac{15}{7}}{2\textrm{ln}(15/8)} \leq \\\leq \frac{\textrm{ln}^2(j+2)+ \textrm{ln}\frac{15}{7}\textrm{ln}3 -  \textrm{ln}^2\frac{15}{7}}{2\textrm{ln}(15/8)}
\end{align}
now we use the property of a function ln$^2 x\leq 0.6x$ for $x\geq 3$. As an exercise, let's prove it. First note that second derivative of ln$^2x$ changes sign at ln$x =1$, $x =e$  and is negative for $x>e$, in particular for $x\geq 3$. Next we calculate the values of ln$^2x$ and $0.6x$ as well as their derivatives at $x=3$, which allows us to prove that  ln$^2 x\leq 0.6x$ at least until $x=7$. Then we evalute the slopes at $x=7$ and see that the slope of ln$^2 x$ is $0.55<0.6$. Since the second derivative is always negative, the ln$^2 x\leq 0.6x$ will be satisfied for all $x>7$. A numerical check shows that the inequality can be tightened to ln$^2 x\leq 0.55x$, but we will use the rigorous $0.6$:
\begin{equation}
  \textrm{ln}\sum_{\{j_k\}|j}1 \leq \frac{0.6(j+2)+ \textrm{ln}\frac{15}{7}\textrm{ln}3 -  \textrm{ln}^2\frac{15}{7}}{2\textrm{ln}(15/8)} = \frac{0.3j+ 0.73}{\textrm{ln}(15/8)} 
\end{equation}
Thus
\begin{equation}
\sum_{\{j_k\}|j}1 \leq (15/8)^{0.3j+ 0.73} 
\end{equation}
Now we have all the ingredients for the bound on the $X_j$ in the telescopic sum:
\begin{align}
 \|X_{j}\|\leq \sum_{\{j_k\}|j}\|X_{\{j_k\}}\| \leq \textrm{max}_{\{j_k\}|j}\|X_{\{j_k\}}\| \sum_{\{j_k\}|j}1 \leq \\ \leq  (2^{2.2})^j 2^{1.8 +0.5S(X_0)}  \left(\frac{30}{7}c' \right)^{j} \chi^ {\left(\frac{8}{15}\right)^2\frac{7}{8} j}(15/8)^{0.3j+ 0.73} \leq 6\cdot 2^{0.5S(X)}(24c'\chi^ {\frac{56}{225}})^j\|X\| \label{ofSmall}
\end{align}

\paragraph{Final step} Note that with the bound telescopic sum we can derive a bound for collared $\tilde{X} =\prod_k e^{A_c^{(k)}} X \prod_{k'}e^{-A_c^{(k')}} $, where $c$ is the extra support on each side of $X$ that is allowed to appear in $\tilde{X}$. To that accord, $A_c^{(k)}$ for every $k$ is chosen to include only terms in $A$ that lie fully within a collar of $c$ around $X$. Clearly $A_c^{(k)}$ obeys the same bounds as $A^{(k)}$, so in particular the decomposition $\tilde{X} = \sum_j \tilde{X}_j$ obeys the bound (\ref{ofSmall}). We see that $\prod_k e^{A^{(k)}} X \prod_{k'}e^{-A^{(k')}}$ and $\tilde{X}$ only differ in terms starting from $X_{c+1}$ (because they involve $A_i^{(k)}$'s in regions where $\tilde{X}$ is different). So the difference is bounded by the sum of those:
\begin{equation}
\to \| \prod_k e^{A^{(k)}} X \prod_{k'}e^{-A^{(k')}} -\prod_k e^{A_c^{(k)}} X \prod_{k'}e^{-A_c^{(k')}} \|\leq  \sum_{j=c+1}^\infty \|X_j\| + \|{\tilde{X}}_j\| \leq 2\sum_{j=c+1}^\infty \|X_j\|
\end{equation}
The sum converges:
\begin{equation}
 \sum_{j=c+1}^\infty \|X_j\| \leq \sum_{j=c+1}^\infty 6\cdot 2^{0.5S(X)}(24c'\chi^ {\frac{56}{225}})^j\|X\| =  6\cdot 2^{0.5S(X)}\frac{(24c'\chi^ {\frac{56}{225}})^{c+1}}{1 -24c'\chi^ {\frac{56}{225}}}\|X\|
\end{equation}
for $24c'\chi^ {\frac{56}{225}} <1$ which requires very small $\chi \lesssim \frac{10^{-7}}{c'^4}$ due to the untight bounds used in the derivation. Note that the original Imbrie's construction also required at least $\chi^{1/20} <0.5$ which results in $\chi\lesssim 10^{-6}$ ~--- it's not new for such proofs to accumulate large numbers. We can bound the denominator by $2$ for  $24c'\chi^ {\frac{56}{225}} <0.5$:
The sum converges:
\begin{equation}
 \sum_{j=c+1}^\infty \|X_j\| \leq 12\cdot 2^{0.5S(X)}(24c'\chi^ {\frac{56}{225}})^{c+1}\|X\|
\end{equation}
which results in the bound for the collared approximation:
\begin{equation}
\left\| \prod_k e^{A^{(k)}} X \prod_{k'}e^{-A^{(k')}} -\prod_k e^{A_c^{(k)}} X \prod_{k'}e^{-A_c^{(k')}} \right\|\leq  24\cdot 2^{0.5S(X)}(24c'\chi^ {\frac{56}{225}})^{c+1}\|X\|
\end{equation}

\end{proof}

\subsection{Circuit approximation}
\label{absCarry}

We would like to have a FDL circuit approximation of a unitary $e^A$. 
Consider a simple case of $S(A)=2$ first: $A = \sum_{i} A_{i,i+1}$. The Trotter approximation in $N$ steps is defined as:
\begin{equation}
U_T =(e^{\frac{1}{N}A_{even}}e^{\frac{1}{N}A_{odd}})^N
\end{equation}
where $A_{even} = \sum_{i ~ even} A_{i,i+1}$. The error is given by the next term in BCH formula while it remains small:
\begin{equation}
e^A = U_T + O( \|[A_{even},A_{odd}]\|/N)=U_T + O(L \|A_{i,i+1}\|^2/N)
\end{equation}
 Note that the error over whole macroscopic system diverges with size. However, over a small patch the error is small. Let's see how the divergence arises. Here we use big $O(x)$ notation in a sense that $\|e^A -U_T\|\leq Mx $ for some range of $x$ around $0$. To get the actual value of constant $M$ in the bound, one way is to expand the following combination of unitaries $e^{\epsilon}, e^{\delta}$:
 \begin{equation}
  e^{\epsilon} e^{\delta} -e^{\epsilon + \delta } = \frac{[\epsilon,\delta]}{2} + \sum_{j=3}^\infty \frac{\epsilon^j+\delta^j - (\epsilon+\delta)^j}{j!} + \epsilon\sum_{j=2}^\infty \frac{\delta^j }{j!}+ \delta\sum_{j=2}^\infty \frac{\epsilon^j }{j!} + \frac{\epsilon}{2}\sum_{j=1}^\infty \frac{\delta^j }{j!}+ \frac{\delta}{2}\sum_{j=1}^\infty \frac{\epsilon^j }{j!}
 \end{equation}
 Taking the norm of the above expression, one arrives to multiple series like $\sum_{j=2}^\infty \frac{\|\epsilon\|^j }{j!} = e^{\|\epsilon\|}-1-\epsilon$. To upper bound this function one uses a version of Taylor expansion with a mean-value form of the remained:
 \begin{equation}
  e^{x} = 1+ x + \frac{e^{\tilde{x}}x^2}{2} \quad \textrm{where} \quad \tilde{x}\leq x
 \end{equation}
With that we will be able to derive 
  \begin{equation}
  e^{\epsilon} e^{\delta} -e^{\epsilon + \delta } = \gamma,\quad \|\gamma\| \leq M \|[\epsilon,\delta]\| +O(\|\epsilon\|^3,\|\delta\|^3, \|\epsilon\|^2\|\delta\|,\|\epsilon\|\|\delta\|^2)
 \end{equation}
And the number in front of every term can be obtained explicitly. In fact, the higher order terms are also combinations of commutators ($e^{\epsilon}e^{\delta}$ is an element of a Lie group that can be defined free from matrix multiplication), so higher powers of $L$ only appear as $L^p\|A_i\|^{2p}$ or lower, but we will not need it for our purposes here. We now plug in $e^{\frac{1}{N}A_{even}}e^{\frac{1}{N}A_{odd}} - e^{\frac{1}{N}A} = \gamma$ where $\gamma$ is bounded as above into the Trotter formula:
\begin{equation}
\|e^A -U_T\| = (1+\|\gamma\|)^N -1 \leq e^{\|\gamma\| N}-1
\end{equation}
where we have used that all the coefficients in the Taylor expansion of $e^{\|\gamma\| N}$ are strictly bigger than ones in $(1+\|\gamma\|)^N$. Now we can again use the mean-value form of the remainder of the Taylor expansion to bound the above with explicit constant as long as $e^{\|\gamma\| N}$ is not too big:
\begin{equation}
\|e^A -U_T\|\leq O(\|[A_{even},A_{odd}]\|/N)  = O(L \|A_{i,i+1}\|^2/N)
\end{equation}
We will stick with the big O notation for convenience. Let's prove that for an operator $X$ supported over $S(X)$ sites the conjugation with $e^A$ is simulated by our Trotter approximation with system-size independent error:
\begin{lemma}
\label{bozu}
for $A= \sum_i A_{i,i+1}$
\begin{equation}
e^A X e^{-A} = U_T X U_T^\dag + O \left( ||X|| \frac{(S(X)+4N)}{N}||A_{i,i+1}||^2 \right)
\end{equation}
\end{lemma}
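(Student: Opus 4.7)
The plan is to telescope the Trotter error over the $N$ steps and bound each single-step conjugation error by a local quantity rather than a system-wide one. Introduce interpolants $U_T^{(k)}=e^{\frac{N-k}{N}A}V^k$ for $k=0,\ldots,N$ with $V=e^{A_{even}/N}e^{A_{odd}/N}$, so that $U_T^{(0)}=e^A$ and $U_T^{(N)}=U_T$. The telescoping identity
\begin{equation*}
e^A X e^{-A} - U_T X U_T^\dagger = \sum_{k=0}^{N-1} e^{\frac{N-k-1}{N}A}\!\left[UZ_kU^\dagger - VZ_kV^\dagger\right]\!e^{-\frac{N-k-1}{N}A},
\end{equation*}
with $U=e^{A/N}$ and $Z_k=V^kXV^{k\dagger}$, reduces the problem, since conjugation by unitaries preserves the operator norm, to bounding $\|UZ_kU^\dagger - VZ_kV^\dagger\|$ for each $k$. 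Because $V$ is a depth-two nearest-neighbor circuit, $\|Z_k\|=\|X\|$ and $S(Z_k)\leq S(X)+4k$.

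The core estimate is a local single-step bound
\begin{equation*}
\|UZ_kU^\dagger - VZ_kV^\dagger\|\leq M\,(S(Z_k)+c)\,\frac{\|A_{i,i+1}\|^2}{N^2}\,\|Z_k\|,
\end{equation*}
for constants $M,c$. To prove it I would restrict $A$ to an interval $R$ consisting of $S(Z_k)$ plus a collar of constant width $c\geq 4$, forming $A_R=\sum_{\{i,i+1\}\subset R}A_{i,i+1}$, $U_R=e^{A_R/N}$, and $V_R=e^{A_{R,even}/N}e^{A_{R,odd}/N}$. Since $V$ is strictly local of depth two, every gate outside $R$ commutes with $V_RZ_kV_R^\dagger$, so $VZ_kV^\dagger=V_RZ_kV_R^\dagger$ exactly. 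For the continuous evolution the replacement is only approximate, but a Dyson-series estimate on $UZ_kU^\dagger-U_RZ_kU_R^\dagger$ begins at order $c$ in the ``time'' $1/N$, so a fixed $c$ already makes this tail $O(1/N^2)$ per step, subleading compared to the BCH contribution. Inside $R$ the BCH analysis laid out just before the lemma statement gives $\|U_R-V_R\|=O(|R|\,\|A_{i,i+1}\|^2/N^2)$, because only $O(|R|)$ local commutators contribute, and then $\|U_RZ_kU_R^\dagger-V_RZ_kV_R^\dagger\|\leq 2\|U_R-V_R\|\|Z_k\|$ closes the local bound with $|R|=S(Z_k)+2c$.

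Substituting $S(Z_k)\leq S(X)+4k$ and summing,
\begin{equation*}
\|e^A X e^{-A} - U_T X U_T^\dagger\|\leq M\|X\|\|A_{i,i+1}\|^2\sum_{k=0}^{N-1}\frac{S(X)+4k+c}{N^2}\leq M'\,\frac{S(X)+4N}{N}\,\|A_{i,i+1}\|^2\,\|X\|,
\end{equation*}
which is the claim. The principal obstacle is the localization of $UZ_kU^\dagger$, since $e^{A/N}$ has no strict lightcone; the saving grace is that one Trotter step carries only ``time'' $1/N$, so the Dyson-series tail beyond a collar $c$ begins at order $(v/N)^c$ with $v=O(\|A_{i,i+1}\|)$, letting a fixed $c$ absorb the non-locality without spoiling the $1/N$ scaling.
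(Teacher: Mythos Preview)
Your step-by-step telescoping is sound and reaches the same bound, but the paper takes the reverse order: it first replaces $e^{A}$ by $e^{A_c}$ with a single collar $c=2N$---chosen to match the lightcone of the depth-$2N$ Trotter circuit---incurring an error $O\big((4\|A_{i,i+1}\|)^{2N+1}\big)$ that is manifestly negligible, and only then applies the global Trotter bound once on the finite system of $S(X)+4N$ sites, noting that $U_TXU_T^{\dagger}=e^{A_c}_{T}Xe^{-A_c}_{T}$ by exact cancellation outside the lightcone. This ``collar, then Trotterize'' order sidesteps the growing-support issue your argument must confront. One imprecision in your write-up deserves flagging: the claim that a fixed collar $c$ makes the Dyson tail $O(1/N^2)$ per step is not uniform in $k$, because the combinatorial prefactor in the nested-commutator bound scales as $S(Z_k)^{c+1}\sim N^{c+1}$ for $k$ near $N$, cancelling the $1/N^{c+1}$ from the time step. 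The correct accounting is that the tail per step is $O(\|A_{i,i+1}\|^{c+1})$, so over $N$ steps it contributes $O(N\|A_{i,i+1}\|^{c+1})$, which is still subleading to the BCH term for $c\geq 2$ under the paper's big-$O$ convention (fixed $N$, $\|A_{i,i+1}\|\to 0$). Your route has the virtue of tracking the support growth explicitly step by step; the paper's buys simplicity by matching the collar to the circuit lightcone once and for all.
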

We see that the system size dependence drops out, also $U_T X U_T^\dag$ is an operator supported on $S(X)+4N$ sites, and expressed by a tensor network of size $(S(X)+4N) \times 4N$. The terms outside the causal cone of $X$ just cancel.
\begin{proof}
First we use the bounds like those in Lemma \ref{supercharge} for $e^A X e^{-A}$:
\begin{equation}
 e^A X e^{-A} =\sum_j X_j, \quad S(X_j) =S(X) +2j, \quad \|X_j\|\leq 2^{0.5S(X)} (4\|A_{i,i+1}\|)^j \|X\|
\end{equation}
Then we use the approximation by collared evolution like the one in Theorem \ref{thepain} (the final part of the proof) 
\begin{equation}
\|e^A X e^{-A} - e^{A_c} X e^{-{A_c}}\| \leq 2 \sum_{j=c+1}^\infty \|X_j\| \leq  4\cdot2^{0.5S(X)} (4\|A_{i,i+1}\|)^{c+1} \|X\|
\end{equation}
Here $c$ is the number of sites we are to include in the "collar" of $X$. $A_c$ are the terms in $A$ supported inside the collared region.  
The problem reduced to finding an FDL approximation to $ e^{A_c}$ supported on $S(X) + 2c$ - a finite number of spins. For that purpose, Trotter approximation is used. 
The total precision defined as: 
\begin{equation}
\|e^{A} X e^{-A} - e^{A_c}_T X e^{-{A_c}}_T \| \le   4\cdot2^{0.5S(X)} (4\|A_{i,i+1}\|)^{c+1}\|X\|  +O((2c+S(X)) \|A_{i,i+1}\|^2\|X\|/N)
\end{equation}
Note that if we were Trotter approximating $e^A$ instead of the collared one, we still get the same result $e^A_T X e^{-A}_T = e^{A_c}_T X e^{-{A_c}}_T $ by cancellations of the spare terms as long as $c\geq2 N$. So there exists an FDL circuit $e^A_T$ over the whole system that approximates $e^A$ locally in every patch.

Plugging in the collar as small as $c=2$ , we find the first term $\sim(4\|A_{i,i+1}\|)^{3}$ to be negligible (we assume $2^{0.5S(X)}$ is suppressed by powers of $\|A_{i,i+1}\|$ for big O bound to make sense). The only allowed $N$ for $c=2$ for the total circuit to reduce to collared circuit is $N=1$ The error is now:
\begin{equation}
\|e^{A} X e^{-A} - e^{A_c}_T X e^{-{A_c}}_T \| \le   O((4+S(X)) \|A_{i,i+1}\|^2\|X\|)
\end{equation}
More generally, we can take bigger $c=2N$, and the first term will always be suppressed as $\sim(4\|A_{i,i+1}\|)^{2N+1}$, which allows to neglect it compared to the second term:
\begin{equation}
\|e^{A} X e^{-A} - e^{A_c}_T X e^{-{A_c}}_T \| \le   O((4N+S(X)) \|A_{i,i+1}\|^2\|X\|/N)
\end{equation}
\end{proof}
Note that the precision never really gets better than $O(\|A_{i,i+1}\|^2\|X\|) $. That's concerning because our original collar $c=2$ approximation already had $\sim(4\|A_{i,i+1}\|)^{3}$ precision. So if we want to approximate that well, we need to resort to a higher order Trotter scheme (instead of $e^{\frac{1}{N}A_{even}}e^{\frac{1}{N}A_{odd}}$ it uses a more complicated step). Alternatively, since the dimension of the space we act upon ($d= 2^{S(X) +4}$ for $c=2$) is relatively small, we can instead use the Solovay-Kitaev theorem\cite{SK}. For this particular case it can be phrased as follows: to approximate a unitary on a $d$-dimensional Hilbert space to a precision $\epsilon$, one needs $n=O\left(d^a\textrm{ln}^b\frac{1}{\epsilon}\right)$ 2-qubit gates acting on nearest neighbors in 1d. Here $a,b$ are some constant numbers. So to achieve precision $\epsilon =\|A_{i,i+1}\|^{c+1}$ , one would need $n=O(c2^{ac})$ gates, or $O(2^{ac})$ depth. The number of gates in the $c$'th order of Trotter step also scales exponentially with $c$, giving the same $\|A_{i,i+1}\|^{c+1}$ precision. So whichever method wins depends on the constant $a$ and the scaling of the size of Trotter step.

Now let's apply this logic to the Imbrie circuit $\prod_{k=1}^\infty e^{A^{(k)}}$. We will see that as early as for approximation up to $k= 2$ the first order Trotter scheme becomes useless. Indeed, the $S(A_k)=$floor$(\frac{8}{7}L_{k+1})$, so $S(A_1) = 4$, $S(A_2) =7$, $S(A_3) =14$. The telescopic sum structure $X= X_j$ allows us to drop higher order steps, since step $j$ only uses the terms with $S(A_k)-1\leq j$, and make an error given by Theorem \ref{thepain}:
\begin{equation}
\left\| \prod_k e^{A^{(k)}} X \prod_{k'}e^{-A^{(k')}} -\prod_k^{k_{max}} e^{A_c^{(k)}} X \prod_{k'}^{k_{max}}e^{-A_c^{(k')}} \right\|\leq  24\cdot 2^{0.5S(X)}(24c'\chi^ {\frac{56}{225}})^{c+1}\|X\|
\end{equation}
For $c<6, ~ S(A_2)-1=6>c$, so we only include the 1st order term in the product and obtain the correct $X_1\dots X_6$:
\begin{equation}
\left\| \prod_k e^{A^{(k)}} X \prod_{k'}e^{-A^{(k')}} - e^{A_c^{(1)}} X e^{-A_c^{(1)}} \right\|\leq  24\cdot 2^{0.5S(X)}(24c'\chi^ {\frac{56}{225}})^{c+1}\|X\|
\end{equation}
Now, the $A^{(1)}_i$ has support $4$, but actually the first step of perturbation theory has support $3$, just our definition overestimates it for the sake of consistent form over all $k$. So it's enough to group spins into pairs before applying the Trotter: $A_{\mu,\mu+1} = A^{(1)}_i + A_{i+1}^{(1)}$ and $i=2\mu-1$. As a result we get an approximation $e^{-A_c^{(1)}}_T$ by 4-local gates bounded using Lemma \ref{bozu} as:
\begin{equation}
\|e^{A^{(1)}_c} X e^{-A^{(1)}_c} - e^{A_c^{(1)}}_T X e^{-{A_c^{(1)}}}_T \| \le   O((4N+S(X)) \|A_{\mu,\mu+1}\|^2\|X\|/N) \le O((4N+S(X)) (c'\chi)^2\|X\|/N)
\end{equation}
and due to the grouping the condition of one circuit for the whole system is $4N\leq c$. It can only be satisfied for $c\geq 4$, and $N=$floor$c/4=1$ always:
\begin{equation}
\left\| \prod_k e^{A^{(k)}} X \prod_{k'}e^{-A^{(k')}} - e^{A_c^{(1)}}_T X e^{-{A_c^{(1)}}}_T  \right\|\leq  24\cdot 2^{0.5S(X)}(24c'\chi^ {\frac{56}{225}})^{c+1}\|X\| +O((4+S(X)) (c'\chi)^2\|X\|)
\end{equation}
Comparing the powers, we note that ${\frac{56(c+1)}{225}}>2$ for $c\geq4$, so the first term can be omitted for sufficiently small $\chi$. We found that Trotter-approximating the first step of perturbation theory leads to an error
\begin{equation}
\left\| \prod_k e^{A^{(k)}} X \prod_{k'}e^{-A^{(k')}} - e^{A_c^{(1)}}_T X e^{-{A_c^{(1)}}}_T  \right\|\leq O((4+S(X)) (c'\chi)^2\|X\|) \label{erx}
\end{equation}
everywhere except for rare resonances. One can chose depth $2$ and $4$-local gates, or depth $3$ and $3$-local gates (the second application of Trotter bound leads to the same scaling of error). To go down to $2$-local gates, one may apply Solovay-Kitaev to the $3$-local gates, blowing up the depth by $O\left(8^a\textrm{ln}^b\frac{1}{\epsilon}\right) = O(|\textrm{ln}\chi|^b)$ factor. Note that even though for a uniform deep circuit the causal structure would make $e^{A_c^{(1)}}_T X e^{-{A_c^{(1)}}}_T$ depend on $O(|\textrm{ln}\chi|^b)$ collar, the long patches of gates arising from Solovay-Kitaev are arranged in a checkerboard pattern, so the true causal cone remains within $c$.

To conclude, let us study the relationship between a state $\prod_k e^{A^{(k)}} X|prod\rangle $ and $e^{A_c^{(1)}}_T |prod\rangle$. Any observable $X$ on $S(X)$ sites is approximated by measuring it with $e^A_T |0\rangle $ with error as in Eq. (\ref{erx}). The latter measurement is represented by a system-size independent tensor network due to the cancellations. In particular, we can measure a matrix element of the density matrix over those $S(X)$ sites. The difference in density matrices over $S(X)$ sites is bounded element-wise:
\begin{equation}
|(\rho -\rho_{appr})_{\alpha\beta}| \le   O((4+S(X)) (c'\chi)^2\|X\|)
\end{equation}
We can also show the operator norm bound:
\begin{equation}
\|\rho -\rho_{appr}\| \le   O((4+S(X)) (c'\chi)^2\|X\|)
\end{equation}
here $\rho = tr_{\overline{S(X)}}\prod_k e^{A^{(k)}} |prod\rangle \langle prod|\prod_{k'}e^{-A^{(k')}}$ and $\rho_{appr} = tr_{\overline{S(X)}}e^{A_c^{(1)}}_T |prod\rangle \langle prod| e^{-{A_c^{(1)}}}_T $. $|\prod \rangle$ is a product state in z-basis. To prove the norm bound, consider the eigenbasis of $\rho -\rho_{appr}$. We can make $X$ to be measuring matrix elements in this basis. In particular, the maximum eigenvalue of $\rho -\rho_{appr}$ will not be bigger than the difference $|\langle X \rangle - \langle X \rangle_{appr}|\le   O((4+S(X)) (c'\chi)^2\|X\|)$. Note that for the bound to be correct, the size for causal cone of $X$ should be $S(X) +2c$ and $c\geq 4$, and it does not improve the bound to have $c>5$.

To go to a better precision than $(c'\chi)^2$, specifically $\chi^p$, the depth required is exponential in $p$ until we allow some threshold size $S(X) +2c\sim p$ of gates in our circuit. Indeed, even for $A = \sum_i A_{i,i+1}$ above there isn't a subexponential way to achieve high precision using methods desribed so far. The threshold is reached when that $\sum_{j=1}^c X_j$ needed to approximate to precision $\chi^p$ is contained entirely within one gate, so we can do $\prod_k^{k_{max}} e^{A_c^{(k)}}$ in one step. This does not allow us to have a total circuit though - so methods described so far always require exponential in $p$ size of the circuit if one wants to have one circuit working for operator in any place of the system that reduces to a collared one via cancellations. There might be a method of getting a total circuit of $\sim p$ local gates and subexponential depth via splitting $e^{A} =(e^{A_L} \otimes e^{A_R}) e^{G_{LR}}$ where  $L,R$ are half-chains one the left and right of a specific cut. The $G_{LR}$ defined in this way does not commute with $A_L$, $A_R$, so the error above may be $\sim \|A_R\|$ - divergent with the system size. In fact it's not! The $G_{LR}$ has a collar bound/telescopic sum around the cut that can be found in \cite{EisertOsborne}, but we do not pursue the construction of precise subexponential depth circuit here.


\subsection{Area law}
We will prove the following:

\begin{corollary}
 The state $\prod_{k=1} e^{A^{(k)}}|prod\rangle$ obeys the area law: for $B$ - the half of the system, the entanglement entropy $S(\rho_B) \leq 650 c'\chi$. Here $|prod\rangle$ is a product state in the $z$-basis.
\end{corollary}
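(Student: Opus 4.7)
The plan is to bound $S(\rho_{\text{left}})$ by tracking the sequence of intermediate states $|\psi_k\rangle = e^{A^{(k)}}\cdots e^{A^{(1)}}|\mathrm{prod}\rangle$ and controlling how much entanglement each factor can add across the cut, then summing a series that will be completely dominated by its $k=1$ term. Since $|\mathrm{prod}\rangle$ is a product state in the $z$-basis, the initial entropy vanishes, so it suffices to bound $\sum_{k\ge 1}\Delta S_k$, where $\Delta S_k = S_k - S_{k-1}$ is the entropy generated at step $k$.

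For each $k$ I split the generator as $A^{(k)} = A^{(k)}_L + A^{(k)}_R + A^{(k)}_C$, where $A^{(k)}_L$ collects the terms $A^{(k)}_i$ whose support lies entirely to the left of the cut, $A^{(k)}_R$ those entirely to the right, and $A^{(k)}_C$ the crossing terms whose support intersects both halves. Because each $A^{(k)}_i$ has support of at most $S(A^{(k)}) = \lfloor\frac{8}{7}L_{k+1}\rfloor$ consecutive sites, at most $S(A^{(k)})$ translates cross the cut, so Postulate~\ref{Conj} yields
\[
\|A^{(k)}_C\| \le S(A^{(k)})\,c'\chi^{\lceil L_{k-1}\rceil}.
\]
The piece $A^{(k)}_L + A^{(k)}_R$ is a sum of two commuting operators on disjoint halves, so on its own it exponentiates to a tensor-product unitary across the cut and does not change $S_{\mathrm{cut}}$; within the full $e^{A^{(k)}}$ it does not commute with $A^{(k)}_C$, but this mixing will be automatically absorbed by the entanglement-rate bound below.

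I then invoke a small-incremental-entanglement inequality (Bravyi; Van Acoleyen--Marien--Verstraete): for any Hamiltonian $H = H_L + H_R + H_C$ on a bipartite system, the cut entanglement rate under $e^{iHt}$ obeys $|dS/dt| \le C\|H_C\|\log d_{\min}$, where $C$ is a universal constant and $d_{\min}$ is the dimension of the smaller of the two sub-supports of $H_C$ on either side of the cut. Applied to $H = -iA^{(k)}$, integrated over $t\in[0,1]$, and combined with $\log d_{\min} \le S(A^{(k)})\log 2$, this yields
\[
\Delta S_k \le C\log 2 \cdot S(A^{(k)})^2\, c'\chi^{\lceil L_{k-1}\rceil}.
\]

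Finally, summing over $k$ with $S(A^{(k)})^2 \le (\tfrac{8}{7})^2 (15/8)^{2(k+1)}$ and $\lceil L_{k-1}\rceil \ge (15/8)^{k-1}$ produces
\[
S_{\mathrm{cut}} \le C\log 2 \cdot c'\sum_{k\ge 1}\left(\tfrac{8}{7}\right)^2 (15/8)^{2(k+1)}\chi^{(15/8)^{k-1}},
\]
a series totally dominated by its $k=1$ term of order $\chi$; every subsequent term has a doubly-exponentially decaying exponent of $\chi$ and so contributes a negligible tail. Collecting the constants through the SIE step and the geometric sum then delivers the stated $S_{\mathrm{cut}}\le 650\,c'\chi$. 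The main obstacle is numerical bookkeeping: the particular factor $650$ is set by the explicit universal constant appearing in the SIE bound and by the looseness $\log d_{\min}\le S(A^{(k)})\log 2$; should the cleanest SIE form not be tight enough, a fallback is to invoke the FDL approximation of Appendix~\ref{absCarry} to replace $U$ by a constant-depth circuit near the cut and bound the residual trace-norm error by Fannes--Audenaert.
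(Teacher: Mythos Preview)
Your proposal is correct and follows the same high-level strategy as the paper: bound the entanglement increment $\Delta S_k$ produced by each layer $e^{A^{(k)}}$ using a small-incremental-entangling inequality, then sum over $k$ starting from the product state.

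The only substantive difference is \emph{which} form of the SIE bound you invoke. The paper uses the Schmidt-coefficient version $\Delta S \le c^{*}\sum_n\|r_n\|$ (with $c^{*}\approx 1.9$ from \cite{CLV}); it then expands each crossing term $A^{(k)}_i$ in a Pauli basis on one side, picking up $\tfrac{4}{3}\,2^{S(A^{(k)})}$ terms, so that $\Delta S_k \le c^{*}\tfrac{4}{3}c'\,2^{S(A^{(k)})}\chi^{\lceil L_{k-1}\rceil}$, and bounds the sum by a geometric series with ratio $2^{7}\chi$ to reach the constant $650$. You instead use the Van~Acoleyen--Mari\"en--Verstraete form $|dS/dt|\le C\|H_C\|\log d_{\min}$, which gives only a polynomial prefactor $S(A^{(k)})^2$ rather than the exponential $2^{S(A^{(k)})}$. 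Your per-step bound is therefore strictly tighter and the series converges for a wider range of $\chi$; carried through, it would produce a constant smaller than $650$. Either way the leading behaviour is $O(c'\chi)$, set by the $k=1$ term, so the discrepancy is purely in the numerical bookkeeping you already flag.
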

We will also discuss the tail of the distribution of entanglement entropies in the presence of resonances.

We turn to the paper \cite{LRBtop} which derives a bound on entanglement production $S(t) -S(0)$ under the evolution $e^{iHt}$. In our case, we apply the bound for every step using $iHt = A^{k}$. We set $t=1$. 

We use the bound that is given on the last page of \cite{LRBtop}. The interaction between two parts in $H =-iA^{k}$ can be written as a Shmidt decomposition $H_{LR}^{(k)} = \sum_n r_n J_L^n \otimes J_R^n$, where we have freedom to set $||J_L||=1$, $||J_R||=1$. The $A^{k}_i$ are bounded by $\|A_i^{(k)}\| \leq c' \chi^{\textrm{ceil}L_{k-1}}$ and there are $S(A^{(k)})-1=$floor$(\frac{8}{7}L_{k+1}-1)$ of them across the cut, thus the interaction is bounded as $\|H_{LR}^{(k)}\|\leq c'(S(A^{(k)})-1) \chi^{\textrm{ceil}L_{k-1}} $. We can decompose each $A^{k}_i$ across the cut into $4^{S(A^{k}_{i,L})}$ terms ~--- this is how much basis operators are there on one side of the cut. Each term in the decomposition is bounded as the whole operator that's decomposed $\|A_i^{(k)}\|$:
\begin{equation}
 \|r_n\|\leq c' \chi^{\textrm{ceil}L_{k-1}}
\end{equation}
There are a total of $\sum_{S(A^{k}_{i,L})=1}^{S(A^{(k)})/2}4^{S(A^{k}_{i,L})} \leq \frac43 2^{S(A^{(k)})}$ terms. The entangelement  produced over time $t=1$ will be 
\begin{equation}
S(t) -S(0) \leq  c^* \sum_i \|r_i\| \leq c^*\frac43 2^{S(A^{(k)})}  c' \chi^{\textrm{ceil}L_{k-1}}
\end{equation}
The coefficient $c^*\approx 1.9$ is universal and is found in \cite{CLV}. The total entanglement is the sum of entanglement produced by each order $k$ of the circuit. The total bound on entanglement is converging:
\begin{align}
\delta S_{tot} \leq \sum_k  c^*\frac43 c' 2^{\textrm{floor}(\frac{8}{7}L_{k+1})}\chi^{\textrm{ceil}L_{k-1}}  \leq\\\leq c^*\frac43 c'2^{\textrm{floor}(\frac{8}{7}L_{2})}\chi^{\textrm{ceil}L_{0}} \sum_{j=0}^{\infty}\left(2^{\textrm{floor}(\frac{8}{7}L_{3})-\textrm{floor}(\frac{8}{7}L_{2})}\chi^{\textrm{ceil}L_{1}-\textrm{ceil}L_{0}}\right)^j \leq \\ \leq \frac{ c^*\frac43 c'2^{7}\chi}{1 -2^{7}\chi} \leq  c^*\frac43 c'2^{8}\chi \leq 650 c'\chi
\end{align}
where we replaced the sum of exp$(L_k) = $exp$((15/8)^k))$ by a geometric series with a factor such that every term upper-bounds the corresponding term in the original sum, and $\chi<1/2^8$ has been used to get rid of the denominator. Thus the Imbrie states without resonances satisfy area law.

Every resonance is an extra local rotation over region of size $n_r$ that does not have to be small. If it happens that our cut passes through the resonance, we need to add the entanglement created by this rotation to the total entanglement bound. A rotation over $n_r$ sites can create $\Delta S(n_r) \leq n_r$ln$2$ entanglement. Naive expectation is $\leq (n_r/2)$ln$2$ which is true for a unitary acting on a whole system of size $n_r$. But in fact having a bigger system we can create two times more entanglement with the rotation on $n_r$ sites!  Indeed, the bound is saturated by a swap gate,  that turns a state with $n_r/2$ Bell pairs on every side of the cut into a state where every Bell pair crosses the cut.
We believe that this is an overestimation and only two levels are mixed at a typical resonance, producing $\sim$ln$2$ entanglement.

After $k_1$ when the first resonance appears across the cut, there may be other, bigger resonances containing the first one. Since a cut is placed randomly in the system, denote the probability that the biggest resonance across the cut has size $n_r$ by $P(n_r)$. It is exactly the same quantity as the probability of a site to be a part of a connected cluster of resonances of size $n_r$. Imbrie shows that this probability decays faster than any polynomial. Thus the expectation value
\begin{equation}
\sum_{n_r}n_rP(n_r) \leq \textrm{const}
\end{equation}
Moreover, from investigating the first step of the construction we know that the first term has $n_r=3$ and $P(n_r=3)=\epsilon$ (with some coefficient depending on the probability distribution of disorder, in our case $1$) and then it keeps decaying faster than any polynomial, so the leading contribution to the expectation value is the first term:
\begin{equation}
\sum_{n_r}n_rP(n_r) \leq 4\epsilon
\end{equation}
Here the first factor of $2$ is the bound on entanglement instead of $3$, as one of the sides has one site, and the second factor of $2$ absorbs the later terms in the sum, which can be done for sufficiently small $\epsilon$. Thus the total bound on the expectation value of the entanglement entropy is:
\begin{equation}
\overline{S_{tot}} \leq   650 c'\chi + 4\epsilon\textrm{ln}2
\end{equation}
We note that the first term $\gamma^{0.95}$ is smaller than the second term $\gamma^{0.05}$ ~--- so the main contribution to average entanglement is from rare distributions of disorder when resonances cross the cut. We also note that in full construction by Imbrie, the perturbative parts of the circuit $e^{A^{(k)}}$ get adjusted in the presence of resonances as the resonant spins are grouped into metaspins. The entanglement produced by resulting unitaries is still bounded, so we dropped this complication from our simplified circuit.

 Consider rare cuts that may have entanglement bigger than $2$ln$2$, by passing through the big resonant regions. Assume for simplicity exponential  distribution of entanglement as a discrete random variable $P(S=2x$ln$2) = \epsilon^{x}$  for $x$ the half of the size of the resonance. The biggest entanglement across the cut in a given system determined by:
\begin{align}
LP(S_{max}) = L\epsilon^{-S_{max}/2\textrm{ln}2}\approx 1 \\
S_{max} \approx  2\textrm{ln}2\frac{\textrm{ln} L}{(-\textrm{ln} \epsilon)}
\end{align}
To derive maximal entanglement more carefully one needs to use the true statistics of resonances from Imbrie's paper, summarized in Postulate {\bf{4}}, that results in a superpolinomially decaying $P(S)$ anyway. It only adjusts $\textrm{ln} L \to e^{\sqrt{\textrm{ln}L}}$ - a difference between the two will be almost impossible to detect.

\subsection{LRB}

Consider 
\begin{equation}
A(t) = e^{iHt}Ae^{-iHt} =U^\dag e^{iH_zt}UAU^\dag e^{-iH_zt}U = \sum_r U^\dag e^{iH_zt}A_r e^{-iH_zt}U
\end{equation}
where $H$ is the Hamiltonian of our spin chain and $U =\prod_{k=1} e^{A^{(k)}}$ is the unitary that diagonalizes it. According to Theorem \ref{thepain}, $A_r$ supported on $S(A) +2r$ sites and is bounded by $6\cdot 2^{0.5S(A)}\alpha^r\|A\|$ with $\alpha = 24c'\chi^ {\frac{56}{225}} $. Now the terms in $e^{iH_zt}$ that do not overlap with $A_r$ commute through, so
\begin{equation}
e^{iH_zt}A_r  e^{-iH_zt} = e^{iH_z^r t}A_r e^{-iH_z^r t}
\end{equation}
where $H_z^r = \sum_m H_z^{r+m} $ and the terms are bounded as $36\alpha^m$ (using $H=\sum_i H_i$, $\|H_i\| \leq 2+\gamma\leq3$ and $S(H_i) =2$). Now we find $m^*$ such that $\alpha^{m^*}t = 1$. Define
\begin{equation}
Q_{r+m^* +p} =e^{i\sum_m^{m^*+p}H_z^{r+m} t}A_r e^{-i\sum_m^{m^*+p}H_z^{r+m} t}- e^{i\sum_m^{m^*+p-1}H_z^{r+m} t}A_r e^{-i\sum_m^{m^*+p-1}H_z^{r+m} t}
\end{equation}
and $Q_{r+m^*} = e^{i\sum_m^{m^*}H_z^{r+m} t}A_r e^{-i\sum_m^{m^*}H_z^{r+m} t}$. The higher-order $Q$'s are bounded
\begin{equation}
\|Q_{r+m^* +p}\| \leq \|e^{iH_z^{r+m^* +p} t} - 1\| \|A_r\| \leq  \|A\|c^*\alpha^p
\end{equation}
Where $c^*$ is a new constant coefficient. We found that the telescopic sum for $e^{iH_z^r t}A_r e^{-iH_z^r t}$ has the form
\begin{equation}
e^{iH_z^r t}A_r e^{-iH_z^r t} = \sum_p Q_{r+m^* +p}, \quad \|Q_{r+m^* +p}\| \leq \|A_r\|c^*\alpha^p \label{quest}
\end{equation}
After plugging it in the decomposition of $A(t)$:
\begin{equation}
A(t) =\sum_r\sum_p U^\dag  Q_{r+m^* +p} e^{-iH_zt}U
\end{equation}
The last conjugation with the Imbrie circuit produces one more telescopic sum. Resumming all three telescopic sums, we get:
\begin{equation}
A(t) =\sum_q  Q_{m^*}', \quad \|Q_{m^* +q}\| \leq \|A_r\|\frac12 c^{**}\alpha^q 
\end{equation}
A new constant and a factor $\frac12$ was introduced for convenience. Now a local operator $B$ distance $d_{AB}$ from $A$ will only overlap with $m^*+q \geq d_{AB}$ terms in the telescopic sum. Resumming once more, we obtain:
\begin{equation}
\|[A(t),B]\| \le ||A|| ||B||c^{**}\alpha^{d_{A,B}-m^*} = ||A|| ||B||c^{**}\alpha^{d_{A,B}}t
\end{equation}

That concludes the proof.
\section{Interpretation of Imbrie's proof}
\subsection{Bound on generators}
\label{normToy}
 Here we discuss the bound on the local generator $A^{(k)}_i$ of the perturbative step $e^{\sum_i A^{(k)}_i}$ of Imbrie circuit. Here $i$ is the spatial index labeling sites of the chain, and $k$ is the order of the perturbation theory. Each $A^{(k)}_i$ is acting on floor$(\frac{15}{7}L_k)$ spins, where the radius of locality $L_k = (1\frac{7}{8})^k$ is growing with $k$. We have presented a bound
 \begin{equation}
 \|A^{(k)}_i\|  \leq c'\chi^{\textrm{ceil}L_{k-1}} \label{TarBo}
 \end{equation}
Imbrie works with a different decomposition: instead of $A^{(k)} =\sum_i A^{(k)}_i$ he uses $A^{(k)}=\sum_g A^{(k)}(g)$, where the sum iterator $g$ will be explained below. On p. 27 of the arXiv version of the paper\cite{Imbrie}, Imbrie presents a combinatorial argument that  $\sum_{g|i} A^{(k)}(g)$ of terms contributing to $A^{(k)}_i$ converges, and moreover the Bound (\ref{TarBo}) holds, however the Bound is not explicitly presented.
 
Let's define the path formalism used by Imbrie in the following toy examples to facilitate the understanding of the argument on p. 27.

\paragraph{Toy example 1.}
Consider an operator $J$ defined as:
\begin{equation}
J= \sum_g J(g)
\end{equation}
The sum is over paths $g = \left\{ i_1\dots i_{|g|}\right\}$ in the space of spin flips. Each $i_n$ is a site on a lattice, and the length of the string of those sites $g$ is denoted $|g|$. In the $\sum_g$, it is implied that $|g|>1$. There's one condition on the string $g$: each new spin flip has to be within $1$ lattice spacing from one of the old ones:
\begin{equation}
\forall n>1 ~ \exists n'<n : |i_n -i_{n'}|\leq 1
\end{equation}
The operators $J(g)$ are supported on the smallest consecutive region containing all spin flips in $g$ plus a 1 spin collar on the left and on the right. Consider the following way to bound the terms in $J$:
\begin{equation}
|J_{\sigma\sigma'}(g)|\leq \frac{\delta^{|g|}}{C(|g|)}
\end{equation}
where $C(x)$ is a combinatorial factor to be determined and $\delta \ll 1$. We'd like to see what $C(x)$ do we need to be able to translate this bound to a more traditional norm bound. Consider a decomposition of $J$ into "local" terms centered around sites of the lattice:
\begin{equation}
J = \sum_i J_i, \quad J_i = \sum_{g|i_1 =i}J(g)
\end{equation}
where $J_i$ contains only graphs starting at $i_1$. Let's show that a norm bound on $J_i$ follows from a bound on $J_{\sigma\sigma'}(g)$:
\begin{equation}
\|J_i\| \leq \sum_{g|i_1 =i}\|J(g)\| 
\end{equation}
Note that any operator $J(g)$ supported on $\cup g \cup c$ where $c$ is a 1-spin collar can be written in an operator basis with one basis vector per matrix element :
\begin{equation}
J(g) = \sum_{\sigma\sigma'}J_{\sigma\sigma'}(g)|\sigma\rangle\langle\sigma'|
\end{equation}
and since $J(g)$ acts as an identity outside $\cup g \cup c$, and doesn't flip $c$, the number of terms in the sum is $4^|\cup g|2^|c|\leq 2^{2|g| +c}$:
\begin{equation}
 \sum_{g|i_1 =i}\|J(g)\| \leq \sum_{g|i_1 =i} \sum_{\sigma\sigma'}J_{\sigma\sigma'}(g)\leq \sum_{g|i_1 =i}2^{2|g| +C}\frac{\delta^{|g|}}{C(|g|)}
\end{equation}
We can now split the sum into terms corresponding to each $|g|$:
\begin{equation}
\|J_i\| \leq\sum_{x =1}^{\infty}2^{2x +C}\frac{\delta^{x}}{C(x)}\sum_{g|i_1 =i, ~ |g|=x}1
\end{equation}
A simple upper bound on the number of elements in the sum $\sum_{g|i_1 =i, ~ |g|=x}1$ is if a flip $i_n$ can be chosen from $2n-1$ sites centered at $i_1$. For $|g|=x$ flips, it gives $(2x-1)!! = (2x-1)(2x-3)\dots$.
\begin{equation}
\|J_i\| \leq\sum_{x =1}^{\infty}2^{2x +C}\delta^{x}\frac{(2x-1)!!}{C(x)}
\end{equation}
So a choice of $C(x)=(2x-1)!!$ will cancel this factor. What remains is a geometric series which is bounded for $4\delta<1/2$:
\begin{equation}
\|J_i\| \leq 2^{C+1}(4\delta)
\end{equation}
we have arrived at a desired bound.

\paragraph{Toy example 2.} We will repeat the above construction with a minor modification: in the 
\begin{equation}
J= \sum_g J(g)
\end{equation}
the $\sum_g$ is now taken over $g$ such that $|g|>L_1$. The bound on $J_{\sigma\sigma'}(g)$ and $C(x) =(2x-1)!!$ are the chosen the same. All the steps go through in exactly the same way, until we arrive to:
\begin{equation}
\|J_i\| \leq\sum_{x =L_1}^{\infty}2^{2x +C}\delta^{x}\frac{(2x-1)!!}{C(x)} = \sum_{x =L_1}^{\infty}2^{2x +C}\delta^{x} \leq 2^{C+1}(4\delta)^{L_1} 
\end{equation}
we see that any level is bounded by the smallest allowed flip.

\paragraph{Toy example 3.} We will now introduce an extra layer of structure to match Imbrie's construction. Consider
\begin{equation}
J= \sum_G J(G)
\end{equation}
where $G =\left\{ g_1, g_2 \dots g_{|G|}\right\}$ a string of spin flip paths $g_n$ defined in examples before. The costraint on the paths in $G$ is now:
\begin{equation}
\forall n>1 ~ \exists n'<n : d(g_n,g_{n'})\leq 1
\end{equation}
where $d(g_1,g_2)$ is the distance in lattice spacings between the two nearest spins in $g_1,g_2$. 

The support of $J(G)$ is in the smallest consecutive region containing all spin flips of underlying paths of $G$ plus some constant collar $c$. Surprisingly, it is enough to bound
\begin{equation}
|J_{\sigma\sigma'}(G)|\leq \frac{1}{C(|G|)}\prod_{g\in G}\frac{\delta^{|g|}}{C(|g|)}
\end{equation}
where $C(x) = (2x-1)!!$ again. Indeed, for a "local" term
\begin{equation}
J_i = \sum_{G| i=i_1\in g_1 \in G}J(G)
\end{equation}
the norm bound
\begin{equation}
\|J_i\|\leq \sum_{G| i=i_1\in g_1 \in G}\|J(G)\|\leq \sum_{G| i=i_1\in g_1 \in G} \frac{4^{\sum |g|}2^c}{C(|G|)}\prod_{g\in G}\frac{\delta^{|g|}}{C(|g|)}
\end{equation}
The sum over $G$ can be split into sums over paths with fixed values of $|G|$ and all $|g_m|$:
\begin{equation}
\|J_i\|\leq   \sum_{|G|,|g_m|} \frac{4^{\sum |g_m|}2^c}{C(|G|)}\prod_{m}\frac{\delta^{|g_m|}}{C(|g_m|)}\sum_{G| i=i_1\in g_1 \in G, ~|G|,|g_m|} 1
\end{equation}
so we need to do the counting again. We first count the small graphs as before, and then count how many ways are there to place $|G|$ small graphs of sizes $|g_m|$ so that each new one is a neighbor to one of the old ones. We get the following bound:
\begin{align}
    \sum_{G| i=i_1\in g_1 \in G, ~|G|,|g_m|} 1 \leq \\ \leq \prod_{m} C(|g_m|) |g_1|(|g_1|+2|g_2|)(|g_1|+2|g_2| + 2|g_3|)\dots (2\sum_m |g_m| -|g_1|)
\end{align}
or if we plug it in:
\begin{align}
\|J_i\|\leq  \\ \leq \sum_{|G|,|g_m|} 4^{\sum |g_m|}2^c\delta^{\sum_m|g_m|}\frac{|g_1|(|g_1|+2|g_2|)(|g_1|+2|g_2| + 2|g_3|)\dots (2\sum_m |g_m|-|g_1|)}{C(|G|)}
\end{align}
Or if we separate sums over $|G|,\sum_m |g_m|=X,|g_m|=x_m$:
\begin{equation}
\|J_i\|\leq   \sum_{|G|} \sum_{X\geq |G|} 4^{X}2^c\delta^{X} \sum_{x_m| \sum_m x_m =X}\frac{x_1(x_1+2x_2)(x_1+2x_2 + 2x_3)\dots (2X-x_1)}{C(|G|)}
\end{equation}

We see that the smallest power of $\delta$ is given by $\sum_m |g_m| = |G|$, in which case the $C(|G|)$ exactly cancels its numerator:

\begin{equation}
\|J_i\|\leq   \sum_{|G|}  4^{|G|}2^c\delta^{|G|} F(|G|,\delta)
\end{equation}
where
\begin{align}
F(|G|,\delta)=1 +\\ +\sum_{X> |G|} 4^{X-|G|}\delta^{X-|G|} \sum_{x_m| \sum_m x_m =X}\frac{x_1(x_1+2x_2)(x_1+2x_2 + 2x_3)\dots (2X-x_1)}{C(|G|)}
\end{align}
we want to show that  $f(|G|,\delta)$ can be bounded by a constant for sufficiently small $\delta$. It is a series in terms of $n=X- |G|$. $F = \sum_{n=0}^{\infty}F_n$ and:
\begin{equation}
F_n =(4\delta)^n \sum_{x_m| \sum_m x_m =|G|+n}\frac{x_1(x_1+2x_2)(x_1+2x_2 + 2x_3)\dots (2(|G|+n)-x_1)}{C(|G|)}
\end{equation}
There are always $|G|$ terms $x_m$. One can already see that
\begin{equation}
\sum_{x_m| \sum_m x_m =|G|+n}\frac{x_1(x_1+2x_2)(x_1+2x_2 + 2x_3)\dots (2(|G|+n)-x_1)}{C(|G|)} \leq (|G| +n)^{2|G|}
\end{equation}
so we only get a polynomial prefactor in front of $\delta^n$ and the series converges. What we want to show is that  it converges to something that goes to $\infty$ only exponentially in $|G|$, so we can suppress it by $(4\delta)^|G|$. Clearly the naive bound $(|G| +n)^{2|G|}$ where dropped $C(|G|)$ altogether gives a factorial-like scaling even for the first term in the sum that we know is $1$. We know that number of terms in the sum for each $n$ is in fact given by $C_{|G|+n}^{|G|}< 2^{|G|+n}$, so what's left is to bound a single term in the sum:
\begin{align}
\frac{x_1(x_1+2x_2)(x_1+2x_2 + 2x_3)\dots (2(|G|+n)-x_1)}{C(|G|)} \leq \\ \leq \left(\frac{(2(|G|+n)-1}{2|G|-1}\right)^{|G|} = \left(1 +\frac{2n}{2|G|-1}\right)^{|G|}
\end{align}
these terms summed with $4\delta^n$ can only lead to logarithmic corrections $(1/(-\text{ln}\delta))^{|G|}$ to the $F(|G|,\delta$. Putting it all together:
\begin{equation}
F_n \leq(4\delta)^n 2^{|G|+n}\left(1 +\frac{2n}{2|G|-1}\right)^{|G|}
\end{equation}
There is no factorial in $|G|$, so:
\begin{equation}
F (|G|,\delta) = \sum_{n=0}^{\infty}F_n \leq c'= O(1)
\end{equation}
dropping the logarithmic corrections, and
\begin{equation}
\|J_i\|\leq \sum_{|G|}c' 2^c (4\delta)^{|G|} \leq c' 2^{c+1} (4\delta)
\end{equation}
Each term in the sum is bounded by our choice of $C(|G|)$. We can also consider long paths of $|G|>L_2$ like in the toy example 2 and get $ c' 2^{c+1} (4\delta)^{L_2}$.

\paragraph{Actual construction}

The structure of $J_i$ that we described is present in the generators $A_i$ of steps 1 and 2 of Imbrie's construction, with several differences. One difference is that in the original bound on $A_{\sigma\sigma'}(g)$ the $\gamma/\epsilon$ stands in place of $\delta$ that we used here. Another difference with $J_i$ is that the paths for $A$ are bounded in length: from below which leads to a higher power of $\gamma/\epsilon$ and from above by a growing scale $L_{k+1}$, which gives truly local generators, but does not in any way affect the derivation above. Also, the paths in $g^{(k)}$ do not have to touch on a chain - instead, a collar is allowed, which is then lifted to fill in the gap with new graphs as described on pp. 45-46 of paper\cite{Imbrie}. We discuss how the collar appears in the Subsection \ref{supoDance}.

Finally, in Imbrie's paper $C(x)\ne (2x-1)!!\sim (x!)^{1/2} $, instead it's an arbitrary power $C(x) \sim (x!)^{2/9}$ (the graph factorial is defined on p.23 in the same way as our toy examples). The combinatorial factors are still controlled, but they require a more careful way to bound $\sum_g$ and  extra tricks like resumming the long paths as a subroutine, as described on p. 27 of paper\cite{Imbrie}. We note that in the earlier arXiv version of the paper, the $n!^{1/2}$ was used as a combinatorial factor.

Here we repeat the argument on p.27. The construction results in $H
_{pert}=\sum_g J(g)$. Then $\langle \sigma | H_{pert} |\sigma'\rangle  =\sum_g J_{\sigma\sigma'}(g)$ where the terms in the sum have a bound proven in\cite{Imbrie} for them. We'd like to prove that the sum is bounded as well. For that we do a transformation $\sum_g |J_{\sigma\sigma'}(g)| =\sum_g |J_{\sigma\sigma'}(g)| c(g) c^{-1}(g) \leq  $ max$_gJ_{\sigma\sigma'}(g)| c(g) \sum_g c^{-1}(g)$ where $c(g)$ is some arbitrary function of the graph $g$. If $c(g)$ is chosen in such a way that $\sum c(g)^{-1}\leq 1$, then $\sum_g |J_{\sigma\sigma'}(g)| \leq $ max$_gJ_{\sigma\sigma'}(g)| c(g)$. Imbrie chooses $c(g)$ in such a way that max$_gJ_{\sigma\sigma'}(g)| c(g) \leq (a\chi) ^{\textrm{min}|g|}$ with some constant $a$, and proceeds to show that  $\sum c(g)^{-1}\leq 1$ for that choice. Since we consider $J_{\sigma\sigma'}(g)$, it is nonzero only for $g$'s that connect $\sigma$ and $\sigma'$, the min$|g|$ is at least the length of the interval containing the difference $\sigma-\sigma'$. Similar process can be applied to $A$, where  min$|g|$ is the minimal  remaining order ceil$L_{k-1}$.

\subsection{Bound on support}
\label{supoDance}

Imbrie's diagonalization proceeds in steps labeled by $k=1,2,3\dots$. At each step a unitary rotation $O_re^{A^{(k)}}$ is applied to the Hamiltonian, where $O_r$ is an operator acting spatially only near resonances, and as identity everywhere else. The generator of non-resonant part of the circuit is given as a sum of local terms $A^{(k)} = \sum_g A(g)$, where $g$ is the data in the format of the generalized path, containing, in particular, the locations of spin flips of $A(g)$. The number of spin flips is upper bounded by the order of the generalized path $|g|$ ~--- the power of the magnetic field involved in this term (some flips can act on the same location and "cancel" each other). At step $k$, that order is $L_{k-1}\leq|g| < L_{k}$, where $L_k= (15/8)^k$ (see, for example, p.24 of the new arXiv version). One may group terms into $A^{(k)}_i = \sum_{g:i\in\textrm{supp}A(g)}A(g)$ where the first site on which $A(g)$ acts nontrivially is $i$. We demonstated in Subsection \ref{normToy} how  $\|A^{(k)}_i\|$ constructed this way can be bounded knowing the bound on $A(g)$. Here we'll discuss the support of $A(g)$ and $A^{(k)}_i$. 

The construction has a separate term $A(g)^{(k)}$ for each matrix element $A(g)^{(k)}\sim |\sigma\rangle \langle \sigma'|$, given by $A(g)^{(k)} = \frac{J(g)^{(k-1)}}{E_\sigma^{(k-1)} -E_{\sigma'}^{(k-1)}}$. Here $J^{(k-1)}$ are the terms appearing after $k-1$ previous rotations of $H$. From the point of view of support, $J(g)^{(k-1)}$ combines support of the lower order paths that are part of $g$ data, while $E_{\sigma}^{(k-1)}$ is calculated using the diagonal part of the Hamiltonian truncated to terms $J(g)_{diag}^{(k-1)}$ of order less than $L_{k-1}$ (see p. 48 of paper\cite{Imbrie}. Such terms are diagonal, thus they have to wrap onto itself at least once to cancel all the flips. Using that, Imbrie proves a strong bound on extra support on each side $c_k=\frac12$ supp$E^{(k-1)}_\sigma - E^{(k-1)}_{\sigma'}\leq \frac{15}{14}L_{k-1}$. So the term we get in $A(g)$ has the same positions of spin flips as $J(g)$, and a collar around those spin flips of size $\frac{15}{14}L_{k-1}$.

Let's now discuss the support of $J^{(k-1)}(g)$ included in $A^{(k)}(g)$. The orders of terms $A^{(k)}(g)$ are in $[L_k-1,L_k)$.
The order $|g|$ limits the size the interval containing its spin flips. Thanks to bridging of the gap described on on pp. 45-46 of paper\cite{Imbrie}, order directly translates into maximal size of the interval where flips are happening (it is exactly $|g|<L_k$ for terms in $A^{(k)}$, and $|g|/2\leq L_{k-1}/2$ for diagonal terms in $J(g)_{diag}^{(k-1)}$ appearing in the denominator for $A^{(k)}$). So the total support is either $|g|$ or $|g|/2$ plus two collars $c_k$. We illustrate how the collars appear by explicitly listing the properties of terms for the first few steps. 
\begin{center}
\begin{tabular}{ c c c }
 $L_0=1$ & perturbation order $[1,15/8]$ is removed at step 1 by $A^1 ~ \Rightarrow$ & $H^1 + J^1$ \\ 
 ~ & ~ & $J^1$ starts with order 2 \\  
 $L_1=15/8$ & perturbation order $[2,3]$ is removed at step 2 by $A^2 ~ \Rightarrow$ & $H^2 + J^2$ \\
  ~ & $H^2$ includes order 2, support 3 terms & $J^2$ starts with order 2 \\ 
   $L_2\approx 3.5$ & perturbation order $[4,6]$ is removed at step 3 by $A^3_{c=2} ~ \Rightarrow$ & $H^3 + J^3$ \\
  ~ & $H^3$ includes order 6, support 7 terms & $J^3$ starts with order 7 \\ 
     $L_3\approx 6.6$ & perturbation order $[7,12]$ is removed at step 4 by $A^4_{c=5} ~ \Rightarrow$ & $H^4 + J^4$ \\
  ~ & $H^4$ includes order 12, support 16 terms & $J^4$ starts with order 13 \\ 
$L_4\approx 12.4$ & $c = 6+3 + 1+1 = 11$ & ~   
\end{tabular}
\end{center}
We see that even though in $H^3$ the support 7 terms are present, somehow the collar for $A^4$ is chosen to be 5. In other words, the path in the collar at every step should be touching, otherwise the collars blow up. We assume that terms with bigger collar are not included in each $A(g)^{(k)}$. Let us see what is the order of the extra terms that would appear if we tweak the procedure this way ~--- if it is the same order as $J^{(k)}$, then we can just include them with other paths.

For illustration, let's use denominators within a shorter collar in $A^4$, and check that  $e^A(H^3 + J^3) e^{-A} = H^4 + J^4$ with the same condition on $J^4$ to include only $|g|>13$. Recall that we want to remove terms in $J^3$ of orders $7$ to $12$. Now in principle  $H^3$ contains order $6$, support $7$ terms, that look like 2-site collar around the graph of step $3$. The collar itself is a step $2$ graph of support $3$, order $2$, with collar of size $1$ around it. It doesn't go further down. Now after conjugation, $[A,H^3]$ should cancel terms in $J^3$ in such a way that only terms of order $>12$ remain. But instead of cancellation, it gives a $(1- \Delta E_{full}/\Delta E_{short})J^3_{7..12}$ remainder $\Delta E_{full} -\Delta E_{short}$ contains terms in $H^3$ whose center graph does not touch the spins flipped by $A$. In this case, it is only the center size $3$, collar $2$ (so support 7) term of order 6. $7+6 =13$ which is enough to send it to $J^4$, and it's not even enough to make it connected - one needs to lift an order $2$ term to make it continuous. Unfortunately this activity does not really fit into the structure that Imbrie proves bounds for, but it can be done at least in principle. Since Imbrie makes no mention of this, we leave the exact way it's done as an open question and proceed. 

Here's the table at $k$'th step:
\begin{center}
\begin{tabular}{ c c c }
     $L_{k-1} = (15/8)^{k-1}$ & pert. order $[\textrm{ceil}(L_{k-1}),\textrm{floor}(L_k)]$ removed at step $k$ by $A^k_{c_k} ~ \Rightarrow$ & $H^k + J^k$ \\
  ~ & $H^k$ includes order floor$L_k$, support floor$(L_k/2)+ 2c_k$ terms & $J^4$ starts with order floor$L_k+1$ \\ 
$L_k\approx (15/8)^k$ & $c_{k+1} = \textrm{floor}(L_k/2)+ c_k$ & ~   
\end{tabular}
\end{center}
In other words $c_{k+1} = 1 +\sum_{k'=2}^k\textrm{floor}(L_{k'}/2) \leq \sum_{k'=1}^kL_{k'}/2$ which is the expression used by Imbrie. Let's evaluate it: 
\begin{equation}
c_{k+1} = \frac{1}{2}\sum_{k'=1}^k (15/8)^{k'} = \frac{1}{2}(15/8)^{k}\sum_{k'=0}^{k-1} (8/15)^{k'} \leq \frac{1}{2}L_{k}\sum_{k'=0}^{\infty} (8/15)^{k'}  = \frac{1}{2}L_{k} \frac{1}{1 - 8/15} = \frac{15}{14} L_k
\end{equation}
This is exactly as Eq 4.50 on p.46. So the support of a term in $A^k$ is bounded by floor$L_k + 2c_k \leq L_k + \frac{15}{7} L_{k-1} = \frac{15}{7}L_k = \frac87 L_{k+1}$. And the order is bounded by $L_{k-1}$ from below as seen in Subsection \ref{normToy} (even $k=1$ works out since ceil$L_0=1$). The final piece we need for the postulates in the main text is the support of resonances: after perturbative rotations $A^k$ of step $k$, there come nonperturbative rotations that may flip spins within an interval $3L_k$ (p.15) and have a collar $15/14L_j = 15/14L_{k-1}$ (p.25) on each side. So total support is $3L_k + \frac87 L_k = 4\frac17 L_k\leq 4.2 L_k$.

\end{document}